\newtheorem{proposition}{Proposition}
\newtheorem{lemma}{Lemma}
\newtheorem{theorem}{Theorem}
\newtheorem{corollary}{Corollary}
\newcommand{\commentout}[1]{}
\title{Optimal Information Design \\ in Sender-Receiver Cheap Talk Interactions}
\author{Itai Arieli
\institute{Technion\\ Haifa, Israel}
\email{iarieli@technion.ac.il}
\and
Ivan Geffner
\institute{Utrecht University\\
Utrecht, Netherlands}
\email{i.e.geffnerfuenmayor@uu.nl}
\and
Moshe Tennenholtz
\institute{Technion\\ Haifa, Israel}
\email{moshet@technion.ac.il}
}
\begin{document}
\maketitle

\begin{abstract}
This paper considers the dynamics of cheap talk interactions between an oblivious receiver and a sender with different amounts of information. Even though it may seem that having additional information about the state of the game is always beneficial to the sender, we show that there are cases in which garbling the information of a fully informed sender can improve not only receiver's utility in equilibrium, but also that of the sender herself. We also provide efficient algorithms that output the optimal amount of information in sender-receiver scenarios with binary actions and extend some of these results to settings with multiple senders and one receiver.
\end{abstract}

\section{Introduction}

 Consider a cheap talk interaction between a seller and a buyer in which the seller has in its possession an art piece that can be either an original (\textbf{OG}), a fake that is almost indistinguishable from an original (\textbf{IF}), or a fake easily distinguishable by an art expert (\textbf{DF}). The buyer wants to buy the piece only if it is an original. The seller wants to sell it only if it is of types \textbf{OG} or \textbf{IF}, since otherwise it is highly likely that she will eventually lose reputation for selling fakes (even if the buyer cannot tell right away). If the item is not sold, both agents get $0$ utility. Instead, if the buyer buys the item, the exact utilities are shown in the following table.

$$\begin{array}{|c||c|c|}
\hline
\mbox{Type} & \mbox{Seller} & \mbox{Buyer}\\
\hhline{|=#=|=|}
\mbox{\textbf{OG}} & 1 & 1 \\
\hline
\mbox{\textbf{IF}} & 1 & -5\\
\hline
\mbox{\textbf{DF}} & -5 & -5\\
\hline
\end{array}.$$

Suppose that, initially, the item is equally likely of being \textbf{OG}, \textbf{IF} or \textbf{DF}. Moreover, assume that the seller is an art expert, and thus can distinguish between types \textbf{OG} and \textbf{DF}, and also that she knows if the item is a fake or not from the source (which means that she can distinguish between all three types). Suppose that the buyer is an art collector, but not necessarily an art expert (i.e., the buyer has no information about the item's type besides the prior). It is easy to check that, in this scenario, the only Nash equilibrium is the one in which the buyer never buys the item, giving both agents $0$ utility. To see this, note that, if the seller could convince the buyer to buy the item in some cases, she would definitely do so whenever the item is of type \textbf{IF}, giving the buyer a strictly negative expected utility.

Consider an identical scenario but in which the seller is \textbf{not} an art expert. This means that the seller still knows if the piece is a fake or not from the source but, whenever the item is fake, she can't tell how good of a fake it is (i.e., she can't tell if it is of type \textbf{IF} or \textbf{DF}.) Here, there is a Nash equilibrium that gives $\frac{1}{3}$ expected utility to each agent in which the seller signals if the item is an original or not, and the buyer buys it only if it is original (note that the sender does not want to sell the item whenever it is fake). Thus, in this example, the receiver is better off by not being knowledgeable about art. This raises the natural question of, given a general sender-receiver setting, determining the \emph{optimal} amount of knowledge that the sender must have in order for her (resp., the receiver) to maximize her expected utility in equilibrium.  More precisely, we start with a setting in which the sender has full information about the state of the game, and our aim is constructing an information filter that garbles the information that is disclosed to the sender in a way that the best equilibrium of the resulting game gives her (resp., the receiver) the maximum possible utility. In this paper we focus on constructing such filter in a setting with binary actions. In particular, we provide $O(k \log k)$ algorithms that output the optimal disclosure of information to the sender, where $k$ is the number of states. If we restrict our attention to the best possible equilibrium for the receiver, we also provide an efficient algorithm when there are multiple senders that are equally knowledgeable. A perhaps surprising property of our setting is that, in the example above, the fact that both players can simultaneously increase their utility is no coincidence. In fact, we prove in our analysis that, if the sender can increase her utility by garbling her information, the utility of the receiver increases as well (see Proposition~\ref{prop:pareto}). 

\subsection{Related Literature}

Restricting the information available to the sender has been increasingly gaining traction in the community. Most notably, Bergemann, Brooks and Morris~\cite{bergemann2015limits} considered a buyer-seller setting and studied possible ways to limit the seller's information. In their work, they characterized the set of all pairs of possible utilities achievable in equilibrium. In particular, they provide an algorithm that achieves the optimal way to limit the seller's information to maximize the expected buyers' utility. Ichihashi~\cite{ichihashi2019limiting} considered a Bayesian persuasion setting and studied how the outcome of the interaction is affected when the sender's information is restricted. One of their results is that, if the receiver restricts sender information in a pre-play stage, the best utility that the receiver can get in this setting coincides with the one that the receiver would get in the ``flipped game'', where the receiver persuades the sender. 
We study a similar problem, except that in our case agents have no commitment power.
Other papers have studied settings with limited communication between the sender and the receiver both in the context of common-interest coordination games~\cite{blume2013language, de2003game} and cheap talk models~\cite{jager2011voronoi, hagenbach2020cheap}. Restricting the  information available also became quite relevant in the context of moderating large language model models in order for them to avoid producing undesirable responses in some prompts~\cite{patil2023can,liu2024protecting,glukhov2023llm}. Our underlying model is also similar to that of Bayesian persuasion~\cite{kamenica2011bayesian}, especially when the sender has no commitment power~\cite{lipnowski2020cheap, corrao2023mediated, arieli2023mediated, arieli2023resilient}.

The rest of the paper is organized as follows. In Section~\ref{sec:model} we introduce the main concepts and definitions used throughout the paper. 
In Section~\ref{sec:results-filter} we state our main results regarding the computation of the best filters in sender-receiver games with one and two senders. These results are later proved in Sections~\ref{sec:proof-1-sender} and \ref{sec:proof-2-senders}. We end with a conclusion in Section~\ref{sec:conclusion}.

\section{Basic Definitions}\label{sec:model}

\subsection{Information Transmission Games}

An \emph{information transmission game} involves a \emph{sender} $s$ and a \emph{receiver} $r$, and is defined by a tuple $\Gamma = (A, \Omega, p, M, u)$, where $A = \{a_1, \ldots, a_\ell\}$ is the set of actions, $\Omega = \{\omega_1, \ldots, \omega_n\}$ is the set of possible states, $p$ is a commonly known prior distribution over $\Omega$ that assigns a strictly positive probability to each possible state, $M$ is a finite set that contains the messages that the sender can send to the receiver 
($M$ is usually assumed to be finite),
and $u : \{s, r\} \times \Omega \times A \longrightarrow \mathbb{R}$ is a utility function such that $u(i, \omega, a)$ gives the utility of player $i$ (where $i$ is either the sender or the receiver) when action $a$ is played at state $\omega$. Each information transmission game instance is divided into three phases. 
\begin{itemize}
    \item \textbf{Phase 1:} A state $\omega \in \Omega$ is sampled according to the distribution $p$ and is disclosed to the sender.
    \item \textbf{Phase 2:} The sender sends a message $m \in M$ to the receiver.
    \item \textbf{Phase 3:} The receiver plays an action $a \in A$ and each player $i$ receives $u(i, \omega, a)$ utility.
\end{itemize}

Given an information transmission game $\Gamma = (A, \Omega, p, M, u)$, a strategy profile $\vec{\sigma}$ for $\Gamma$ consists of a pair of strategies $(\sigma_s, \sigma_r)$ for the sender and the receiver, where $\sigma_s$ is a map from $\Omega$ to distributions over $M$ (which is denoted by $\Delta(M)$), and $\sigma_r$ is a map from $M$ to $\Delta(A)$.
We say that $\vec{\sigma}$ is a Nash equilibrium if no player can increase its utility by defecting from $\vec{\sigma}$. More precisely, if we denote by $u_i(\vec{\sigma})$ the expected utility that $i$ gets when players play $\vec{\sigma}$, then $\vec{\sigma}$ is a Nash equilibrium if $u_i(\vec{\sigma}_{-i}, \tau_i) \le u_i(\vec{\sigma})$ for all players $i \in \{s, r\}$ and all strategies $\tau_i$ for $i$.

\subsection{Information Aggregation Games}

At a high level, \emph{information aggregation games} are information transmission games with multiple senders. For a rigorous definition, an information aggregation game consists of a receiver $r$, and a tuple $\Gamma = (S, A, \Omega, p, M, u)$ where $S = \{s_1,s_2,\ldots,s_k\}$ is the set of senders and $A = \{a_1, \ldots, a_\ell\}$, $\Omega = \{\omega_1, \ldots, \omega_k\}$, $p \in \Delta(\Omega)$, $M$ and $u$ are defined as in information transmission games, with the only exception that $u$ is a function from $S \cup \{r\} \times \Omega \times A$ to $\mathbb{R}$ instead of a function from $\{s, r\} \times \Omega \times A$ to $\mathbb{R}$. An information aggregation game runs in the same way as an information transmission game, except that in phase $1$ the state is disclosed to all senders, and in phase 2 each sender $s_i$ sends a message $m_i$ to the receiver.

\subsection{Filtered Information Transmission Games}

We are interested in settings where the sender does not have a full picture about the state of the game. In order to model this, we will assume that, in Phase 1, the state $\omega$ is not directly disclosed to the sender, but that instead it goes through an \emph{information filter} $X: \Omega \to \Delta(\{0,1\}^*)$ that maps each state $\omega$ to a distribution over possible signals, and these signals are binary strings of arbitrary length (we use the notation $\{0,1\}^* := \bigcup_{n \ge 0} \{0,1\}^n$). The sender receives a signal sampled from $X(\omega)$ instead of $\omega$ itself. If $\Gamma = (A, \Omega, p, M, u)$ is an information transmission game and $X: \Omega \to \Delta(\{0,1\}^*)$ is a filter, we denote by $\Gamma(X)$ the resulting filtered information game where the state goes through $X$ before being disclosed to the sender. Filtered information aggregation games are defined analogously. For future reference, given an information transmission game $\Gamma$, we define a \emph{sender-optimal filter} $X_s^\Gamma$ as a filter in which the utility that the sender gets in the best equilibrium is maximal. We define a \emph{receiver-optimal filter} $X_r^\Gamma$ analogously. 

\section{Main Results}\label{sec:results-filter}
We next address the question of finding the optimal amount of information the sender must have in information transmission games and information aggregation games with binary actions.
\begin{theorem}~\label{thm:binary-actions}
Let $\Gamma = (A, \Omega, p, M, u)$ be an information transmission game with $A = \{0,1\}$. Then, there exists an algorithm $\pi$ that outputs a receiver-optimal filter (resp., a sender-optimal filter). The algorithm runs in $O(k \log k)$ time, where $k = |\Omega|$.
\end{theorem}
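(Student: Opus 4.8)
The plan is to collapse the double quantifier ``over all filters $X$ and all equilibria of $\Gamma(X)$'' into an optimization over a single vector $\phi\in[0,1]^{\Omega}$, and then solve that optimization by sorting. Since $A=\{0,1\}$, the only payoff-relevant statistic of an equilibrium is the probability $\phi(\omega)$ that the receiver plays action $1$ when the true state is $\omega$. Writing $d_i(\omega)=u(i,\omega,1)-u(i,\omega,0)$ and $D_i=\sum_\omega p(\omega)d_i(\omega)$ for $i\in\{s,r\}$, the expected utility of player $i$ is $\sum_\omega p(\omega)u(i,\omega,0)+T_i(\phi)$ where $T_i(\phi)=\sum_\omega p(\omega)\phi(\omega)d_i(\omega)$, so each optimal-filter problem is equivalent to maximizing the linear form $T_i$ over the set of \emph{achievable} profiles $\phi$. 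The first and central step is therefore a lemma characterizing exactly which $\phi$ are induced by an equilibrium of some $\Gamma(X)$.

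To prove this characterization I would use that, with binary actions and cheap talk, every equilibrium reduces to at most two effective messages---one leading the receiver to play $1$ and one leading her to play $0$---and that pooling all action-$1$ paths into a single message (and likewise for action $0$) relaxes both the sender's incentive constraints and the receiver's obedience constraints, so it is without loss of generality. Summing the individual best-response inequalities over the relevant messages then yields four linear conditions: receiver obedience $T_r(\phi)\ge 0$ and $T_r(\phi)\ge D_r$, and sender incentive compatibility $T_s(\phi)\ge 0$ and $T_s(\phi)\ge D_s$. Necessity is exactly this summation; for sufficiency I would exhibit the explicit two-signal filter that routes probability mass $p(\omega)\phi(\omega)$ to the ``recommend $1$'' signal and the complement to the ``recommend $0$'' signal, and verify that the profile in which the sender reports her signal truthfully and the receiver obeys is a Nash equilibrium precisely when these four inequalities hold. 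One degenerate family must be added separately: the babbling equilibria in which the receiver plays a constant action, where the sender can no longer steer the outcome and faces no incentive constraint at all; these contribute the two profiles $\phi\equiv 1$ (an equilibrium iff $D_r\ge0$) and $\phi\equiv 0$ (iff $D_r\le0$), in which the receiver simply plays her prior-optimal action.

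Granting the lemma, consider the receiver-optimal problem, i.e.\ maximizing $T_r$. The crucial observation is that the two receiver-obedience constraints are lower bounds on the very objective being maximized and so never bind, while the two sender constraints collapse to the single budget constraint $T_s(\phi)\ge\max(0,D_s)$. What remains is a continuous (fractional) knapsack: maximize $\sum_\omega a_\omega\phi_\omega$ subject to $\sum_\omega b_\omega\phi_\omega\ge c$ and $\phi\in[0,1]^{\Omega}$, with $a_\omega=p(\omega)d_r(\omega)$, $b_\omega=p(\omega)d_s(\omega)$, and $c=\max(0,D_s)$; the sender-optimal problem is the mirror image, interchanging $d_s$ and $d_r$ and taking $c=\max(0,D_r)$. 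I would solve each knapsack by the standard threshold method: for the correct Lagrange multiplier $\lambda\ge 0$ the optimum sets $\phi_\omega=1$ exactly when $a_\omega+\lambda b_\omega>0$, so sorting the states by the breakpoints $-a_\omega/b_\omega$ and sweeping $\lambda$ locates the optimum, with at most one fractional coordinate, in $O(k\log k)$ time. Finally I would compare the knapsack value against the $O(1)$ babbling benchmark, output whichever filter attains the maximum, and reconstruct it in $O(k)$ time from the two-signal construction; the whole running time is dominated by the single sort.

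The main obstacle is the characterization lemma, and within it the precise form of the sender's incentive constraints. The subtle point is that a sender-incentive inequality binds at a signal only when the opposite action is genuinely available in equilibrium; once the receiver's action is constant the sender cannot influence anything and the constraint evaporates, which is exactly why the babbling profiles must be handled by hand rather than folded into the four-inequality polytope. After the feasible set is pinned down, checking that the two-signal profile is an equilibrium and that the reduced single-constraint knapsacks are solved correctly by the sorting sweep is routine.
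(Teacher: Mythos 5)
Your proposal is correct and follows essentially the same route as the paper: your two-message reduction with the four aggregated incentive inequalities is exactly the paper's restriction to binary filters with the sender/receiver incentive-compatibility constraints (Equation~\ref{eq:ic-s}), your breakpoints $-a_\omega/b_\omega$ coincide with the paper's sort key $\frac{u_r(\omega,0)-u_r(\omega,1)}{u_s(\omega,1)-u_s(\omega,0)}$, and your single-fractional-coordinate Lagrangian sweep with a babbling fallback is the paper's Step~5 computation of $q$ with the constant-filter default. The only difference is packaging---you phrase the optimization as a fractional knapsack solved by a multiplier sweep, where the paper derives the same contiguous-block structure via explicit exchange arguments (Lemmas~\ref{lem:char-values1} and~\ref{lem:char-values2})---and your handling of the degenerate babbling profiles and of infeasibility matches the paper's Propositions~\ref{prop:characterization1} and~\ref{prop:characterization2}.
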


If there are two senders instead, the following result shows that we can reduce the problem of computing a receiver-optimal filter to a linear programming instance where the number of variables and constraints are linear over the number of states.

\begin{theorem}~\label{thm:two-senders}
Let $\Gamma = (S, A, \Omega, p,M, u)$ be an information aggregation game with $|\Omega| = k$, $S = \{s_1, s_2\}$, and $A = \{0,1\}$. Then, finding a receiver-optimal filter reduces to a linear programming instance with $k$ variables and $2k+2$ constraints.
\end{theorem}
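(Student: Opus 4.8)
The plan is to parametrize every (filter, equilibrium) pair by the single object that determines all players' payoffs, namely the vector $\phi \in [0,1]^\Omega$ whose coordinate $\phi(\omega)$ is the probability that the receiver ends up playing action $1$ when the true state is $\omega$. Since payoffs depend only on the induced distribution over (state, action) pairs, the receiver's expected utility is a fixed linear functional $\sum_{\omega} p(\omega)\big(\phi(\omega)u(r,\omega,1)+(1-\phi(\omega))u(r,\omega,0)\big)$ of $\phi$, and likewise for each sender. The whole theorem then reduces to showing that the set $\mathcal{F}$ of \emph{implementable} vectors $\phi$ (those arising from some filter $X$ and some Nash equilibrium of $\Gamma(X)$) is a polytope described by the $2k$ box constraints $0\le\phi(\omega)\le 1$ together with exactly two additional linear inequalities; maximizing the linear receiver objective over $\mathcal{F}$ is then a linear program with $k$ variables and $2k+2$ constraints.

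I would establish that $\mathcal{F} = \{ \phi : 0 \le \phi \le 1,\ \text{(two linear constraints)}\}$ by proving the two inclusions separately. For \emph{necessity}, I take an arbitrary filter and equilibrium with outcome $\phi$ and read off the constraints from the equilibrium conditions: the box constraints are immediate, and the two extra inequalities come from combining the receiver's best-response requirement on the on-path posteriors — which, written in terms of $d(\omega)=u(r,\omega,1)-u(r,\omega,0)$, forces the pool of states leading to each of the two actions to be self-justifying — with the senders' incentive constraints, expressed through $g_i(\omega)=u(s_i,\omega,1)-u(s_i,\omega,0)$. The key simplification is that with binary actions each of these conditions collapses to a single linear inequality in $\phi$, so only two nontrivial constraints survive.

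The substance of the argument is \emph{sufficiency}: for every $\phi$ satisfying the constraints I must exhibit a filter and an actual Nash equilibrium of $\Gamma(X)$ realizing it. Here I would exploit the presence of two senders through a cross-checking (mutual verification) equilibrium: design a canonical filter (with finite signal support, as is WLOG) that hands both senders the same garbled signal whose conditional law reproduces $\phi$, let both senders relay it truthfully, and let the receiver follow the relayed signal whenever the two reports agree. Because the prior has full support, every disagreement profile is off-path, so in a Nash equilibrium the receiver is free to respond to it with whatever action deters the deviator, and truthful relaying becomes a best response for each sender. The delicate point — and what I expect to be the main obstacle — is that a single disagreement profile $(m_1,m_2)$ can be produced by either sender lying, so one off-path response must simultaneously deter two potentially \emph{conflicting} unilateral deviations; showing that such a response always exists precisely when $\phi$ satisfies the two linear inequalities (while also choosing the filter's support so the receiver's on-path obedience holds) is the crux of the construction. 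Once both inclusions are established, a receiver-optimal filter is recovered from any optimal vertex of the resulting linear program, completing the reduction.
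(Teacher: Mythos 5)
There is a genuine gap, and it sits exactly where you placed your bets: the claim that the set $\mathcal{F}$ of implementable outcome vectors $\phi$ is a \emph{single} polytope cut out by the box constraints plus exactly two further linear inequalities is false. First, even within one equilibrium family you already need more than two nontrivial inequalities: in a ``unanimity'' equilibrium (receiver plays $0$ only if both senders report $0$) the senders' pool-ICs give two inequalities and the receiver's on-path obedience gives two more, so ``exactly two'' can only be achieved if the receiver's obedience is removed from the feasible set altogether. Second, and more fundamentally, $\mathcal{F}$ is a union of structurally different equilibrium families with incomparable constraints, and is generally non-convex: an outcome in which the receiver follows sender $s_1$'s preference signal by signal is implementable with no constraint whatsoever on $s_2$'s preferences, yet it can violate the unanimity ICs; conversely, a degenerate game where the receiver strictly prefers action $1$ at every state has $\mathcal{F}=\{\phi\equiv 1\}$ (every posterior strictly favors $1$), a single point that no description of the form ``box plus two inequalities'' carves out. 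Your acknowledged crux is also a real failure mode, not just a delicate point: in the unanimity construction a unilateral lie can only push the outcome toward the default action $1$, so the single off-path response to a disagreement deters both potential deviators simultaneously; for an arbitrary $\phi$ satisfying two aggregate inequalities, deviations can move the outcome in both directions, and one binary off-path response cannot deter two conflicting deviations. Your cross-checking construction therefore establishes sufficiency only for special outcome shapes, not for all $\phi$ in your claimed polytope.

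The paper sidesteps the characterization problem entirely by importing Theorem~\ref{thm:mediator} of Arieli et al.: even granting the players a mediator, the receiver-optimal equilibrium outcome takes one of six canonical forms (two constants, follow $s_1$, follow $s_2$, and the two unanimity outcomes). Each form, when incentive-compatible for the receiver, is implemented without a mediator by precisely your mutual-verification idea --- so your instinct about exploiting two senders is the right one, but it is applied there only to these special forms, where the one-sided deviation structure makes the off-path punishment problem disappear. The optimization then decomposes: the constant outcomes are filter-independent, the follow-$s_i$ outcomes reduce to the one-sender $O(k\log k)$ algorithm of Theorem~\ref{thm:binary-actions}, and each unanimity outcome yields the advertised LP, namely maximize $\sum_{i=1}^k C_i x_i$ subject to $\sum_{i=1}^k A_i x_i \ge 0$, $\sum_{i=1}^k B_i x_i \ge 0$, and $0 \le x_i \le 1$ for all $i$ ($k$ variables, $2k+2$ constraints), with the receiver's obedience checked afterwards in linear time rather than encoded as constraints. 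To repair your blind proof you would need to first prove something like the six-form characterization; that reduction from ``all filters and all equilibria'' to finitely many outcome shapes is the substantive ingredient your proposal is missing.
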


Note that, if we focus on the receiver, we only provide results for the cases of one and two senders. If $\Gamma = (S, A, \Omega, p, M, u)$ is an information aggregation game with $|S| \ge 3$ there is a trivial strategy profile $\vec{\sigma}_{MAX}$ that gives the receiver the maximal possible utility of the game with no filters. The senders simply forward the state of the game to the receiver, and the receiver plays the action that gives her the most utility on the state sent by a majority of the senders. It is easy to check that this is indeed a Nash equilibrium: the senders do not get any additional utility by defecting since the other senders will still send the true state (which means that the receiver will be able to compute the true state as well), and the receiver does not get any additional utility either since she is always playing the optimal action for each possible state. This gives the following result.

\begin{theorem}\label{thm:majority}
Let $\Gamma = (A, \Omega, p,M, u)$ be an information aggregation game with $|S| \ge 3$. Then, $\vec{\sigma}_{MAX}$ is a Nash equilibrium that gives the receiver the maximal possible utility of the game.
\end{theorem}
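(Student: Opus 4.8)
The plan is to establish the two assertions in the statement separately: that $\vec{\sigma}_{MAX}$ is a Nash equilibrium, and that the receiver's utility under it is maximal. I would begin with the second, easier half by pinning down a universal upper bound on what the receiver can ever obtain. Since the receiver ends each play by committing to a single action, conditioning on the realized state $\omega$ her payoff is at most $\max_{a \in A} u(r, \omega, a)$, regardless of the messages she sees or of any filter applied to the senders. Averaging over the prior, the receiver's expected utility in every strategy profile of every filtered instance is therefore bounded above by
\[
U^{\max} := \sum_{\omega \in \Omega} p(\omega)\,\max_{a \in A} u(r, \omega, a).
\]
Under $\vec{\sigma}_{MAX}$ all senders forward the true state, so the majority message coincides with $\omega$ and the receiver plays an action achieving $\max_{a} u(r, \omega, a)$ in each state; hence her expected utility equals $U^{\max}$ and the bound is met with equality.

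Next I would verify that no player has a profitable unilateral deviation. For the receiver the check is immediate: when all senders report truthfully the majority message reveals $\omega$ exactly, and by construction she already selects a payoff-maximizing action at $\omega$, so no alternative strategy can do better. For a sender $s_i$ the crucial point is a counting argument that uses $|S| = k \ge 3$. If $s_i$ alone deviates, the remaining $k - 1$ senders still report the true state, and since $k - 1 \ge 2 > 1$ the true state $\omega$ still receives a strict majority of the votes (any other state receives at most one). Consequently the receiver's computed majority, and hence her action, is unchanged, so the deviating sender's payoff is unchanged as well; no sender can gain.

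I expect the argument to be essentially routine, the only point requiring care being the majority-counting step, which is exactly where the hypothesis $k \ge 3$ is used: two truthful senders suffice to overrule a single deviator, whereas with $k = 2$ a lone deviation could create a tie and let a sender steer the receiver's action. To make the verification airtight I would also fix the receiver's off-path behavior when no strict majority exists (any fixed tie-breaking rule works), noting that this case never arises along a relevant unilateral deviation and is therefore immaterial, and I would assume that $M$ is large enough to name every state so that ``forward $\omega$'' is a legal message.
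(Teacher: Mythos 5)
Your proof is correct and follows essentially the same route as the paper's: senders forward the true state, a single deviator cannot overturn the majority since the other $|S|-1 \ge 2$ truthful reports outnumber any alternative, and the receiver attains the pointwise upper bound $\sum_{\omega} p(\omega) \max_{a} u(r,\omega,a)$ by playing the state-optimal action. You merely make explicit details the paper leaves implicit (the counting step, the tie-breaking rule off path, and that $M$ can encode states), which is a faithful formalization rather than a different argument.
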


In the following sections, we give algorithms that output the best Nash equilibria for the settings described in Theorems~\ref{thm:binary-actions} and \ref{thm:two-senders}, respectively.

\section{Proof of Theorem~\ref{thm:binary-actions}}\label{sec:proof-1-sender}

In this section we provide an algorithm that outputs the receiver-optimal filter for any information transmission game $\Gamma^{F} = (A, \Omega, p, M, u)$ with $S = \{s\}$ and $A = \{0,1\}$. The sender-optimal filter can be computed analogously (see Section~\ref{sec:sender-algorithm} for more details). We start with the following proposition:

\begin{proposition}\label{prop:characterization1}
Let $\Gamma = (A, \Omega, p, M, u)$ be an information transmission game and let $X: \Omega \to \Delta(\{0,1\}^*)$ be an information filter. Then, there exists a Pareto-optimal Nash equilibrium $\vec{\sigma}$ in $\Gamma(X)$ such that, in $\vec{\sigma}$, either:
\begin{itemize}
    \item The receiver always plays 0.
    \item The receiver always plays 1.
    \item The receiver always plays the best action for the sender.
\end{itemize}
\end{proposition}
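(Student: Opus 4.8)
The plan is to reduce everything to each player's net preference for action $1$ over action $0$. Writing $v(\omega) = u(r,\omega,1)-u(r,\omega,0)$ and $w(\omega)=u(s,\omega,1)-u(s,\omega,0)$, and letting $\mu_x\in\Delta(\Omega)$ denote the sender's posterior after receiving signal $x$, I set $V(x)=\mathbb{E}_{\mu_x}[v]$ and $W(x)=\mathbb{E}_{\mu_x}[w]$. Any strategy profile induces, for each signal $x$, an effective probability $q(x)\in[0,1]$ that action $1$ is eventually played, and because actions are binary both utilities are affine in $q$: the sender's expected utility is a constant plus $\sum_x p(x)\,q(x)\,W(x)$ and the receiver's is a constant plus $\sum_x p(x)\,q(x)\,V(x)$, where $p(x)$ is the marginal probability of signal $x$. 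This linearization is what makes the binary-action case tractable and is the backbone of the whole argument.

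Next I would write down the two families of constraints that single out the Nash profiles. On the receiver's side, any message $m$ in the support induces a posterior over $\Omega$, and if the receiver plays action $1$ with positive probability after $m$ then the $v$-expectation under that posterior must be nonnegative (and nonpositive if she plays $0$ with positive probability). On the sender's side, a signal $x$ with $W(x)>0$ strictly prefers the message that maximizes the induced probability of action $1$, whereas a signal with $W(x)<0$ prefers the one that minimizes it, and signals with $W(x)=0$ are indifferent. Consequently, in any equilibrium every sender-type with $W(x)>0$ realizes the same probability $q^{+}=\max_m q_m$ and every type with $W(x)<0$ realizes $q^{-}=\min_m q_m$. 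Plugging this into the formula for the sender's utility shows that, to help the sender, one wants $q^{+}$ as large as possible and $q^{-}$ as small as possible, i.e.\ $q^{+}=1$ and $q^{-}=0$, which is exactly the profile where the receiver plays the sender's preferred action at every signal.

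With this in hand the three cases fall out as follows. The profile $\vec\sigma^{\mathrm{follow}}$ in which the sender announces her preferred action $b(x)$ (namely $1$ if $W(x)>0$ and $0$ if $W(x)<0$) and the receiver obeys is automatically sender-incentive-compatible; it is a Nash equilibrium precisely when the receiver is willing to obey, i.e.\ when $\mathbb{E}[v\mid W>0]\ge 0$ and $\mathbb{E}[v\mid W<0]\le 0$. When these two inequalities hold I would argue that $\vec\sigma^{\mathrm{follow}}$ attains the sender's highest utility over all equilibria (by the $q^{+}=1,\,q^{-}=0$ computation) and then invoke Proposition~\ref{prop:pareto} to conclude that it is also best for the receiver, hence Pareto-optimal; this yields the third bullet. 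When the obedience constraints fail, I would show that communication cannot sustain any informative message: since all $W(x)>0$ types must pool on the highest-$q$ message, no such message can carry a nonnegative $v$-expectation, so $q^{+}$ collapses, and symmetrically for $q^{-}$. The surviving equilibria are then the pooling ones in which the receiver plays the constant action dictated by the sign of $\mathbb{E}_p[v]$. Since exactly one of $\mathbb{E}_p[v]\le 0$ or $\mathbb{E}_p[v]\ge 0$ guarantees that the corresponding constant profile (``always $0$'' or ``always $1$'') is a Nash equilibrium, one of the first two bullets applies and is Pareto-optimal.

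I expect the delicate step to be this collapse argument together with the treatment of the sender-indifferent signals. Types with $W(x)=0$ contribute nothing to the sender's utility but can be freely reassigned among messages, and they are exactly the slack one needs to satisfy the receiver's obedience constraints; since the definition of ``best action for the sender'' leaves the receiver's action unconstrained at such signals, they can be folded into whichever pool keeps the profile an equilibrium. Making precise that a partial collapse (only some $W(x)>0$ types separating) is impossible --- because any high-$q$ message attracts every $W(x)>0$ type by sender-incentive-compatibility --- is where the real work lies, and where the binary-action assumption and the linear structure of the first step are indispensable.
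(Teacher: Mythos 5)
Your route is essentially the paper's own, just phrased in the language of net preferences: your observation that sender incentive-compatibility forces every $W(x)>0$ type onto a common maximal probability $q^{+}$ and every $W(x)<0$ type onto a common minimal $q^{-}$ is exactly Lemma~\ref{lemma:char} and Corollary~\ref{col:char} (with $q^{+}=1-\ell_{\min}$ and $q^{-}=1-\ell_{\max}$); your obedience dichotomy reproduces the paper's Cases 1--3; your ``follow'' profile is $\vec{\sigma}^s$, your pooling fallback is $\vec{\sigma}^r$, and your folding of the $W(x)=0$ signals into the pool the receiver prefers (which, as you correctly sense, only relaxes both obedience constraints, since moving an indifferent type with $V(x)<0$ into the $0$-pool helps the $v$-expectation on both sides) is precisely the tie-breaking the paper builds into $\vec{\sigma}^s$. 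Your collapse argument is the counterpart of the paper's Cases 1 and 2, and incidentally gets the direction right where the paper's text appears to transpose the conclusions: a message $m\in M_0^{\vec{\sigma}}$ on which the receiver strictly prefers $1$ forces $\ell_{\max}=0$, i.e.\ the receiver always plays $1$, not $0$.

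One step as written would fail, though it is repairable with tools you already have. You invoke Proposition~\ref{prop:pareto} to upgrade sender-optimality of the follow profile to Pareto-optimality, but in the paper that proposition is a \emph{consequence} of this characterization (its proof presupposes that $\vec{\sigma}^s$ or $\vec{\sigma}^r$ is a Pareto-optimal equilibrium), so citing it here is circular. The fix comes straight from your own linearization: since the follow profile attains the sender's maximum over \emph{all} strategy profiles (not merely all equilibria), any Nash equilibrium that weakly Pareto-dominates it must also set $q=1$ on $\{W>0\}$ and $q=0$ on $\{W<0\}$, hence can differ only in how the indifferent signals are assigned; choosing the receiver-optimal obedient assignment --- the one you already identified as the most permissive --- therefore yields a Nash equilibrium of the third form that no other equilibrium Pareto-dominates, with no appeal to Proposition~\ref{prop:pareto}. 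With that substitution, and with your partial-collapse argument spelled out (each maximal-$q$ message must individually carry a nonnegative $v$-expectation, while the aggregate maximal-$q$ pool is at most as favorable as the best assignment of indifferent types, so failure of the latter kills all of them at once --- the same averaging the paper leaves implicit), your proposal matches the paper's proof in both structure and level of rigor.
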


Before proving Proposition~\ref{prop:characterization1} we need additional notation. 
First, given a filtered information transmission game $\Gamma(X)$, let $u_i(x, a)$ be the expected utility of player $i$ on signal $x$ and action $a$. This expected utility can be computed with the following equation:
$$u_i(x,a) = \sum_{\omega \in \Omega} \Pr[\omega \mid x] \cdot u_i(\omega, a),$$
where $\Pr[\omega \mid x]$ is the probability that the realized state is $\omega$ conditional on the fact that the sender received signal $x$. Similarly, $\Pr[\omega \mid x]$ has the following expression:

$$
\Pr[\omega \mid x] = \frac{\Pr[\omega \longleftarrow p, \ x \longleftarrow X(\omega)]}{\sum_{\omega \in \Omega} \Pr[\omega \longleftarrow p,\  x \longleftarrow X(\omega)]}.
$$

Let $X_0$ be the set of signals $x$ in $\{X(\omega)\}_{\omega \in \Omega}$ such that 
$u_s(x, 0) > u_s(x, 1)$ (i.e., the set of signals in which the sender prefers $0$), let $X_1$ be the set of signals such that 
$u_s(x, 0) < u_s(x, 1)$, and $X_=$ be the set signals such that 
$u_s(x, 0) = u_s(x, 1)$. 
Given a strategy profile $\vec{\sigma}$ for $\Gamma(X)$, denote by $\sigma_s(x, m)$ the probability that the sender sends message $m$ given signal $x$ and denote by $\sigma_r(a, m)$ the probability that the receiver plays action $a$ given message $m$. Moreover, let $M_0^{\vec{\sigma}}$ denote the set of messages that have a strictly positive probability to be sent by the sender on at least one signal $x$ in which the sender prefers 0 (i.e., $M_0^{\vec{\sigma}}$ is the set of messages $m$ such that there exists $x \in X_0$ such that $u_s(x, m) > 0$). We define $M_1^{\vec{\sigma}}$ and $M_=^{\vec{\sigma}}$ analogously.

With this notation, the following lemma describes all strategy profiles in $\Gamma(X)$ that are incentive-compatible for the sender.

\begin{lemma}\label{lemma:char}
A strategy profile $\vec{\sigma}$ for $\Gamma(X)$ is incentive-compatible for the sender if and only if the following is satisfied:
\begin{itemize}
\item [(a)] If $m \in M^{\vec{\sigma}}_0$, then $\sigma_r(0, m) \ge \sigma_r(0, m')$ for all messages $m'$.
\item [(b)] If $m \in M^{\vec{\sigma}}_1$, then $\sigma_r(1, m) \ge \sigma_r(1, m')$ for all messages $m'$.
\end{itemize}
\end{lemma}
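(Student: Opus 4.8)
The plan is to reduce incentive compatibility to a pointwise condition at each signal and then exploit the fact that, with binary actions, the sender's payoff from a message is an affine function of $\sigma_r(0,m)$.

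First I would observe that, since the sender observes $x$ before choosing a message, the profile $\vec{\sigma}$ is incentive-compatible for the sender if and only if, for every signal $x$ and every message $m$ with $\sigma_s(x,m) > 0$, the message $m$ maximizes the sender's conditional payoff $U(x,m) := \sum_{a} \sigma_r(a,m)\, u_s(x,a)$ over all $m' \in M$. Indeed, the sender's expected utility decomposes as a sum over signals weighted by the probability of each signal, and the message distribution on each signal can be chosen independently; so a profitable deviation exists exactly when, on some signal, positive mass is placed on a message that is not a best response at that signal. (The maximum is attained because $M$ is finite.)

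Next I would rewrite $U(x,m)$ using $A = \{0,1\}$ and $\sigma_r(0,m)+\sigma_r(1,m)=1$, obtaining
$$U(x,m) = u_s(x,1) + \sigma_r(0,m)\,\bigl(u_s(x,0) - u_s(x,1)\bigr),$$
which is affine in $\sigma_r(0,m)$ with slope $u_s(x,0)-u_s(x,1)$. Hence for $x \in X_0$ (positive slope) maximizing $U(x,m)$ is the same as maximizing $\sigma_r(0,m)$; for $x \in X_1$ (negative slope) it is the same as maximizing $\sigma_r(1,m) = 1-\sigma_r(0,m)$; and for $x \in X_=$ the value $U(x,m)$ is independent of $m$, so every message is a best response and these signals impose no constraint.

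Finally I would translate the pointwise characterization into conditions (a) and (b). By definition, $m \in M_0^{\vec{\sigma}}$ means $\sigma_s(x,m)>0$ for some $x \in X_0$; the pointwise condition at that $x$ then forces $\sigma_r(0,m) = \max_{m'} \sigma_r(0,m')$, which is exactly (a), and the symmetric argument with $X_1$ and $\sigma_r(1,m)$ gives (b). For the converse, (a) and (b) guarantee that every message sent with positive probability on a signal in $X_0$ (resp.\ $X_1$) is a best response there, while signals in $X_=$ need no condition, so the pointwise characterization holds and $\vec{\sigma}$ is incentive-compatible. The only delicate point is bookkeeping the two directions uniformly through the single pointwise condition and confirming that $X_=$ contributes no constraints; once the affine form of $U(x,m)$ is in hand, both directions reduce to the same monotonicity observation.
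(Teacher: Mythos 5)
Your proof is correct and takes essentially the same route as the paper's: the paper's argument is the same signal-by-signal deviation reasoning, with your affine form of the sender's payoff $U(x,m)$ in $\sigma_r(0,m)$ left implicit. The only difference is that you spell out the converse direction (that (a) and (b) imply pointwise optimality at every signal, with $X_=$ imposing no constraint), which the paper dismisses as ``clearly''.
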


Lemma~\ref{lemma:char} states that, for a strategy profile to be incentive-compatible for the sender, the receiver should play $0$ with maximal probability on all messages that could be sent on signals in which the sender prefers 0, and the receiver should play $0$ with minimal probability on all messages that could be sent on signals in which the sender prefers 1. In particular, we have the following Corollary.

\begin{corollary}\label{col:char}
A strategy profile $\vec{\sigma}$ for $\Gamma(X)$ is incentive-compatible for the sender if and only if there exist two real numbers $\ell_{\min},\ell_{\max} \in [0,1]$ with $\ell_{\min} \le \ell_{\max}$ such that:
\begin{itemize}
\item [(a)]  $m \in M^{\vec{\sigma}}_0 \ \Longrightarrow \ \sigma_r(0, m) = \ell_{\max}$.
\item [(b)] $m \in M^{\vec{\sigma}}_1 \ \Longrightarrow \ \sigma_r(0, m) = \ell_{\min}$.
\item [(c)] $m \in M^{\vec{\sigma}}_= \ \Longrightarrow \ \ell_{\min} \le \sigma_r(0, m) \le \ell_{\max}$.
\end{itemize}
\end{corollary}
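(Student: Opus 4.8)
The plan is to derive Corollary~\ref{col:char} directly from Lemma~\ref{lemma:char}, recognizing that the corollary merely repackages the two inequality conditions of the lemma into the existence of two threshold values $\ell_{\min} \le \ell_{\max}$. The essential observation is that the conditions ``$\sigma_r(0,m) \ge \sigma_r(0,m')$ for all $m'$'' (for $m \in M_0^{\vec\sigma}$) and ``$\sigma_r(1,m) \ge \sigma_r(1,m')$ for all $m'$'' (for $m \in M_1^{\vec\sigma}$) each assert that $\sigma_r(0,m)$ is constant across the relevant message class and extremal among all messages. Since $\sigma_r(1,m) = 1 - \sigma_r(0,m)$ (the receiver plays one of two actions), condition (b) of the lemma translates into $\sigma_r(0,m)$ being \emph{minimal} over all messages, which is the content of clause (b) of the corollary.

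First I would establish the forward direction: assuming $\vec\sigma$ is incentive-compatible, I define $\ell_{\max}$ to be the common value $\sigma_r(0,m)$ shared by all $m \in M_0^{\vec\sigma}$ and $\ell_{\min}$ the common value shared by all $m \in M_1^{\vec\sigma}$. The fact that these are well-defined constants follows because Lemma~\ref{lemma:char}(a) forces every $m \in M_0^{\vec\sigma}$ to maximize $\sigma_r(0,\cdot)$, so all such messages share the single maximal value; symmetrically, rewriting Lemma~\ref{lemma:char}(b) via $\sigma_r(1,m) = 1-\sigma_r(0,m)$ shows every $m \in M_1^{\vec\sigma}$ minimizes $\sigma_r(0,\cdot)$, giving a single minimal value. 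The inequality $\ell_{\min} \le \ell_{\max}$ is then automatic since a minimum never exceeds a maximum. Clause (c) follows because for any $m \in M_=^{\vec\sigma}$, the value $\sigma_r(0,m)$ lies between the global minimum $\ell_{\min}$ and global maximum $\ell_{\max}$ of $\sigma_r(0,\cdot)$. Some care is needed when $M_0^{\vec\sigma}$ or $M_1^{\vec\sigma}$ is empty, in which case the corresponding threshold can be chosen freely (e.g.\ set $\ell_{\max}=1$ and/or $\ell_{\min}=0$) to satisfy all constraints vacuously.

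For the converse, I would assume the existence of $\ell_{\min}\le\ell_{\max}$ satisfying (a)--(c) and verify that conditions (a) and (b) of Lemma~\ref{lemma:char} hold, which by the lemma yields incentive-compatibility. If $m \in M_0^{\vec\sigma}$, then $\sigma_r(0,m)=\ell_{\max}$; I must check $\ell_{\max} \ge \sigma_r(0,m')$ for every message $m'$. This holds because every message falls into one of the three classes, and clauses (a)--(c) bound each value above by $\ell_{\max}$. The argument for $M_1^{\vec\sigma}$ is symmetric, again using $\sigma_r(1,\cdot) = 1 - \sigma_r(0,\cdot)$ to convert the lower-bound-on-$\sigma_r(0,\cdot)$ statement into the upper-bound-on-$\sigma_r(1,\cdot)$ form demanded by Lemma~\ref{lemma:char}(b).

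I do not expect a serious obstacle here, as the corollary is essentially a reformulation; the only subtle point is the bookkeeping around empty message classes and confirming that every message is covered by exactly one of the three sets $M_0^{\vec\sigma}$, $M_1^{\vec\sigma}$, $M_=^{\vec\sigma}$ so that the threshold values genuinely bound all of $\sigma_r(0,\cdot)$. The main conceptual step is the translation between statements about $\sigma_r(0,m)$ and $\sigma_r(1,m)$ via their complementarity, which is what lets a single pair of thresholds capture both conditions of the lemma simultaneously.
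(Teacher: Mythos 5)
Your route is the intended one: the paper gives no separate proof of Corollary~\ref{col:char}, presenting it as an immediate repackaging of Lemma~\ref{lemma:char}, and your forward direction is exactly right --- Lemma~\ref{lemma:char}(a) forces all messages in $M_0^{\vec{\sigma}}$ to share the single maximal value of $\sigma_r(0,\cdot)$, the complementarity $\sigma_r(1,m)=1-\sigma_r(0,m)$ converts Lemma~\ref{lemma:char}(b) into minimality of $\sigma_r(0,\cdot)$ on $M_1^{\vec{\sigma}}$, and the empty-class cases are handled by free choices of the thresholds. (One small point you don't mention: if $M_0^{\vec{\sigma}}\cap M_1^{\vec{\sigma}}\neq\emptyset$, your construction silently forces $\ell_{\min}=\ell_{\max}$, which is consistent, but it also shows the three classes are not pairwise disjoint in general, contrary to your ``exactly one'' remark.)

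The converse, however, rests on the claim that ``every message falls into one of the three classes,'' and that claim is false as stated: $M_0^{\vec{\sigma}}$, $M_1^{\vec{\sigma}}$ and $M_=^{\vec{\sigma}}$ by definition contain only messages sent with strictly positive probability on some signal, so a message that $\sigma_s$ never sends belongs to none of them, and clauses (a)--(c) place no constraint on it. This is a genuine gap, because Lemma~\ref{lemma:char} quantifies over \emph{all} messages $m'$, including unused ones, and the sender can deviate to an unused message. Concretely: suppose the sender always sends $m_1$, all signals lie in $X_0$, $\sigma_r(0,m_1)=0$, and an unused message $m_2$ has $\sigma_r(0,m_2)=1$; then (a)--(c) hold with $\ell_{\min}=\ell_{\max}=0$, yet deviating to $m_2$ is strictly profitable, so $\vec{\sigma}$ is not incentive-compatible and the ``if'' direction of your argument breaks at the step ``clauses (a)--(c) bound each value above by $\ell_{\max}$.'' To be fair, this imprecision is inherited from the corollary's own statement, which is literally false unless clause (c) is read as constraining every message outside $M_0^{\vec{\sigma}}\cup M_1^{\vec{\sigma}}$ (equivalently, requiring $\ell_{\min}\le\sigma_r(0,m)\le\ell_{\max}$ for all $m\in M$). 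Under that reading, $\ell_{\max}$ and $\ell_{\min}$ really are the global extrema of $\sigma_r(0,\cdot)$ over $M$ and your converse goes through verbatim; you should state that reading explicitly rather than assert the (false) coverage claim.
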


\begin{proof}[Proof of Lemma~\ref{lemma:char}]
Clearly, if (a) and (b) are satisfied, then $\vec{\sigma}$ is incentive-compatible for the sender. Conversely, suppose that $\vec{\sigma}$ is incentive-compatible for the sender but it doesn't satisfy (a). This means that there exists a signal $x \in X_0$ and a message $m$ that satisfies $u_s(x, m) > 0$ and such that $u_r(0, m) < u_r(0, m')$ for some other message $m'$. Therefore, if the sender sends $m'$ instead of $m$ whenever it receives signal $x$, it could increase its expected utility. This contradicts the fact that $\vec{\sigma}$ is incentive-compatible for the sender. The proof of the case in which $\vec{\sigma}$ doesn't satisfy (b) is analogous.
\end{proof}

Corollary~\ref{col:char} characterizes the necessary and sufficient conditions for a strategy profile $\vec{\sigma}$ in $\Gamma(X)$ to be incentive-compatible for the sender. We show next that Proposition~\ref{prop:characterization1} follows from adding the receiver incentive-compatibility constraints into the mix. Denote by $u_r^{\vec{\sigma}}(m, 0)$ the receiver's expected utility when playing action $0$ conditioned on the fact that it received message $m$ and that the sender plays $\sigma_s$. Then, we have the following cases:

\textbf{Case 1:} There exists $m \in M_0^{\vec{\sigma}}$ such that $u_r^{\vec{\sigma}}(m, 0) < u_r^{\vec{\sigma}}(m, 1)$: Since $\vec{\sigma}$ is incentive-compatible for the receiver, it must be the case that $\ell_{\min} = 1$, and therefore that $\sigma_r(0, m) = 1$ for all messages $m$.

\textbf{Case 2:} There exists $m \in M_1^{\vec{\sigma}}$ such that $u_r^{\vec{\sigma}}(m, 0) > u_r^{\vec{\sigma}}(m, 1)$: This time it must be the case that $\ell_{\max} = 0$, and therefore that $\sigma_r(0, m) = 0$ for all messages $m$.

\textbf{Case 3: } $u_r^{\vec{\sigma}}(m, 0) \ge u_r^{\vec{\sigma}}(m, 1)$ for all $m \in M_0^{\vec{\sigma}}$ and $u_r^{\vec{\sigma}}(m, 0) \le u_r^{\vec{\sigma}}(m, 1)$ for all $m \in M_1^{\vec{\sigma}}$: In this case, consider a strategy profile $\vec{\sigma}'$ that is identical to $\vec{\sigma}$ except that, whenever the receiver receives a message $m \in  M_0^{\vec{\sigma}}$, it plays action $0$ with probability 1 (as opposed to probability $\ell_{\max}$), and when the receiver receives a message $m \in M_1^{\vec{\sigma}}$, it plays action $1$ with probability 1 (as opposed to $1-\ell_{\min}$). It is easy to check that, by construction, $\vec{\sigma}'$ is a Nash equilibrium that Pareto-dominates $\vec{\sigma}$. Even so, $\vec{\sigma}'$ can be further improved: consider a strategy profile $\vec{\sigma}^s$ such that the sender sends message $0$ whenever she strictly prefers $0$ to $1$ or whenever she is indifferent and the receiver strictly prefers $0$ to $1$, and she sends message $1$ otherwise. Additionally, the receiver plays the action suggested by the receiver. It is easy to check that $\vec{\sigma}^s$ is a Nash equilibrium that Pareto-dominates $\vec{\sigma}'$ since the receiver plays the best action for the sender in both cases, but in $\vec{\sigma}^s$ she also plays the best action for herself whenever the sender is indifferent while she may not necessarily do so in $\vec{\sigma}'$.

This analysis provides a refinement of Proposition~\ref{prop:characterization1}. In fact, consider the following two strategy profiles:
\begin{itemize}
    \item Strategy $\vec{\sigma}^r$: Regardless of the signal received, the sender signals an empty $\bot$ message. The receiver plays the action that gives her the most utility with no information.
    \item Strategy $\vec{\sigma}^s$: After receiving the signal, the sender signals her preferred action to the receiver. The receiver plays the action sent by the sender.
\end{itemize}

We have the following characterization of Pareto-optimal Nash equilibria:

\begin{proposition}[Refinement of Proposition~\ref{prop:characterization1}]\label{prop:characterization2}
If $\vec{\sigma}^s$ is incentive-compatible for the receiver, then $\vec{\sigma}^s$ is a Pareto-optimal Nash equilibrium of $\Gamma(X)$. Otherwise, $\vec{\sigma}^r$ is a Pareto-optimal Nash equilibrium of $\Gamma(X)$.
\end{proposition}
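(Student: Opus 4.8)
The plan is to derive the proposition from the case analysis that precedes it, which already shows that every Nash equilibrium of $\Gamma(X)$ is Pareto-dominated either by a profile in which the receiver plays a constant action or by $\vec{\sigma}^s$. I would first record three facts. (i) $\vec{\sigma}^r$ is always a Nash equilibrium: the receiver ignores the constant message $\bot$ and best-responds to the prior, so neither player gains from a unilateral deviation. (ii) $\vec{\sigma}^s$ is always incentive-compatible for the sender, since she is handed her preferred action at every signal; hence $\vec{\sigma}^s$ is a Nash equilibrium exactly when it is incentive-compatible for the receiver. (iii) $\vec{\sigma}^s$ attains the sender's globally maximal utility over \emph{all} strategy profiles: her expected utility is an average over signals of her per-signal payoff, which at signal $x$ is bounded by $\max_a u_s(x,a)$ and meets this bound precisely when the receiver plays the sender's preferred action, as $\vec{\sigma}^s$ does at every signal.

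For the first case, suppose $\vec{\sigma}^s$ is incentive-compatible for the receiver, so that it is a Nash equilibrium by (ii). I would show it is Pareto-optimal directly. Let $\vec{\tau}$ be any profile that Pareto-dominates $\vec{\sigma}^s$. By (iii) the sender's payoff is already maximal in $\vec{\sigma}^s$, so $\vec{\tau}$ must give the sender exactly this maximum; this forces the receiver to play the sender's strictly-preferred action at every signal where the sender is not indifferent, so $\vec{\tau}$ and $\vec{\sigma}^s$ yield the same receiver payoff on all such signals. The two can differ only on indifferent signals, and there $\vec{\sigma}^s$ already plays the receiver-optimal action (the tie-break built into the Case~3 construction). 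Hence $\vec{\tau}$ cannot give the receiver strictly more, and $\vec{\sigma}^s$ is Pareto-optimal.

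For the second case, suppose $\vec{\sigma}^s$ is not incentive-compatible for the receiver, so by (ii) it is not a Nash equilibrium. The key point is that the Case~3 branch of the preceding analysis, whenever it applies to some equilibrium, exhibits $\vec{\sigma}^s$ itself as a Nash equilibrium; since that is false here, no equilibrium can fall into Case~3, and therefore every Nash equilibrium of $\Gamma(X)$ falls into Case~1 or Case~2, in which the receiver plays a constant action. Because $\vec{\sigma}^r$ has the receiver play the constant action that is optimal against the prior, it gives the receiver at least as much as any other equilibrium. Thus no equilibrium improves the receiver's payoff, and an equilibrium that merely ties it would play the same constant action (breaking any remaining receiver indifference in the sender's favor), so $\vec{\sigma}^r$ is not strictly dominated and is Pareto-optimal.

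The step I expect to be the main obstacle is the reduction in the second case: one must verify that the alignment condition underlying Case~3, stated for the messages of an arbitrary equilibrium, really does transfer to $\vec{\sigma}^s$, so that ``some equilibrium satisfies Case~3'' is equivalent to ``$\vec{\sigma}^s$ is incentive-compatible for the receiver''. This hinges on the linearity of the receiver's expected utility in her posterior belief: if the receiver weakly prefers an action on each message that pools into $\vec{\sigma}^s$'s recommendation, she still weakly prefers it on the pooled posterior. The secondary nuisance is the tie-breaking in both cases --- when the sender is indifferent in the first case and when the receiver is indifferent between the two constant actions in the second --- which is exactly what pins down the specific profiles $\vec{\sigma}^s$ and $\vec{\sigma}^r$ as the optimal representatives.
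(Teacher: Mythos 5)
Your proposal is correct and takes essentially the same route as the paper: Proposition~\ref{prop:characterization2} is derived there directly from the Case 1--3 analysis, with Cases 1 and 2 forcing a constant receiver action (so $\vec{\sigma}^r$ is the best such equilibrium) and Case 3 exhibiting $\vec{\sigma}^s$ as a Pareto-dominating equilibrium. Your facts (i)--(iii), the direct Pareto-optimality argument via the sender's per-signal maximum, and the pooling/linearity step for transferring receiver incentive-compatibility to $\vec{\sigma}^s$ (where the receiver-favoring tie-break on indifferent signals only strengthens the aggregate inequality) simply make explicit what the paper leaves as ``easy to check.''
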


Proposition~\ref{prop:characterization2} shows that Theorem~\ref{thm:binary-actions} reduces to find the filter $X$ such that the receiver (resp., the sender) maximizes her utility by playing $\vec{\sigma}^s$ (note that $\vec{\sigma}^r$ gives the same utility to both players independently of the filter applied). In the next sections we show how to efficiently compute such filters. However, before going into it, our analysis also shows the following.

\begin{proposition}\label{prop:pareto}
    If there exists a filter that strictly increases the sender's utility, it also (weakly) increases the receiver's utility.
\end{proposition}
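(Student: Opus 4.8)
The plan is to leverage the refinement in Proposition~\ref{prop:characterization2}, which tells us that for any filter $X$, the relevant Pareto-optimal equilibrium is either $\vec{\sigma}^r$ or $\vec{\sigma}^s$. The key structural observation is that $\vec{\sigma}^r$ always yields the same utility to both players regardless of the filter applied, since in $\vec{\sigma}^r$ the sender transmits nothing and the receiver simply plays her best action under the prior. Consequently, any strict gain in the sender's utility from applying a filter can only come from a filter $X$ under which the best equilibrium is $\vec{\sigma}^s$, and whose $\vec{\sigma}^s$-utility strictly exceeds the sender's baseline utility (the maximum over the no-filter equilibrium and the $\vec{\sigma}^r$-value).

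First I would fix a filter $X$ that strictly increases the sender's utility, and invoke Proposition~\ref{prop:characterization2} to conclude that under $X$ the sender's gain is realized in $\vec{\sigma}^s$ (if the best equilibrium were $\vec{\sigma}^r$, the sender's utility would equal her baseline, contradicting a strict increase). So it suffices to analyze the equilibrium $\vec{\sigma}^s$ and compare both players' utilities in it against their baseline utilities. The crucial point is that in $\vec{\sigma}^s$ the receiver always plays the sender's preferred action, so on every signal $x$ the realized action is the one maximizing $u_s(x,\cdot)$. I would then argue that the sender's and receiver's utilities in $\vec{\sigma}^s$ move in the same direction relative to the baseline.

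The heart of the argument is a per-signal comparison. Because in $\vec{\sigma}^s$ the receiver follows the sender's recommendation, incentive-compatibility for the receiver (the hypothesis under which $\vec{\sigma}^s$ is played) means that on each message/signal the action the sender prefers is also weakly optimal for the receiver given her posterior. The plan is to show that, conditioned on each signal $x$, the receiver's expected utility in $\vec{\sigma}^s$ is at least her utility from playing her prior-optimal action on that posterior — and then aggregate over signals. Since $\vec{\sigma}^s$ is incentive-compatible, the receiver is best-responding, so her utility in $\vec{\sigma}^s$ is at least what she would get by ignoring the message and playing her fixed prior-optimal action; the latter quantity, averaged over all signals, is exactly her baseline ($\vec{\sigma}^r$) utility. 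This yields $u_r(\vec{\sigma}^s) \ge u_r(\vec{\sigma}^r)$, i.e. the receiver weakly gains whenever $\vec{\sigma}^s$ is the operative equilibrium.

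The main obstacle, and the step I would spend the most care on, is correctly pinning down the baseline against which ``increase'' is measured and ruling out the degenerate possibility that the sender's strict gain somehow coexists with a receiver loss. Concretely, I must verify that the sender's baseline and the receiver's baseline are realized in a single common strategy profile (so that the comparison is coherent), and that the receiver best-response inequality above is genuinely tight enough to chain together. If the sender strictly gains, then $\vec{\sigma}^s$ is operative and, by the best-response property just described, $u_r(\vec{\sigma}^s)\ge u_r(\vec{\sigma}^r)$, which already dominates one candidate for the receiver's baseline; the remaining care is to check this also dominates the no-filter equilibrium value, which follows because the no-filter game's best receiver equilibrium is itself one of the two profiles $\vec{\sigma}^r$ or $\vec{\sigma}^s$ and the same monotonicity applies. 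Once these baseline identifications are secured, the conclusion that the receiver's utility weakly increases follows directly from the receiver's incentive-compatibility in $\vec{\sigma}^s$.
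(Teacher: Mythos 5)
Your overall strategy is essentially the paper's: invoke Proposition~\ref{prop:characterization2}, observe that $\vec{\sigma}^r$ yields filter-independent utilities, conclude that a strict sender gain must be realized by $\vec{\sigma}^s$ in $\Gamma(X)$, and use the receiver's incentive-compatibility in $\vec{\sigma}^s$ to obtain $u_r^{\Gamma(X)}(\vec{\sigma}^s) \ge u_r(\vec{\sigma}^r)$ (following the recommendation is weakly better, message by message, than ignoring it and playing the prior-optimal action). That inequality is correct and is exactly the paper's closing step. However, there is a genuine gap in your final case analysis. When the unfiltered game's Pareto-optimal equilibrium is $\vec{\sigma}^s$, the receiver's baseline is $u_r^{\Gamma}(\vec{\sigma}^s)$, and you would need $u_r^{\Gamma(X)}(\vec{\sigma}^s) \ge u_r^{\Gamma}(\vec{\sigma}^s)$; your claim that ``the same monotonicity applies'' does not deliver this. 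Garbling can strictly lower the receiver's utility under $\vec{\sigma}^s$ relative to full information, and the best-response argument you use only bounds $u_r^{\Gamma(X)}(\vec{\sigma}^s)$ below by the $\vec{\sigma}^r$ value, which can be strictly smaller than $u_r^{\Gamma}(\vec{\sigma}^s)$. So as written, the chain of inequalities does not close in this case.

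The missing idea --- and the one the paper's proof rests on --- is that this case cannot arise under the hypothesis at all: if $\vec{\sigma}^s$ is a Nash equilibrium of the unfiltered game $\Gamma$, the receiver plays the sender's preferred action in every realized state, so the sender already obtains $\max_{a \in \{0,1\}} u_s(\omega, a)$ state by state, which is an upper bound on her expected utility under \emph{any} filter and any strategy profile. Hence no filter can strictly increase the sender's utility, contradicting the hypothesis. This forces the unfiltered baseline equilibrium to be $\vec{\sigma}^r$ for both players, after which your receiver best-response inequality $u_r^{\Gamma(X)}(\vec{\sigma}^s) \ge u_r(\vec{\sigma}^r)$ finishes the proof. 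With that single observation added, your argument coincides with the paper's; without it, the comparison against the no-filter baseline is unsupported.
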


\begin{proof}
If $\vec{\sigma}^s$ is a Nash equilibrium of $\Gamma$, it gives the maximum possible utility to the receiver in every single state. Therefore, if there exists a filter $X$ such that the sender gets more utility in $\Gamma(X)$ than in $X$, it must be that $\vec{\sigma}^r$ is a Pareto-optimal Nash equilibrium of $\Gamma$ while $\vec{\sigma}^s$ is a Pareto-optimal Nash equilibrium of $\Gamma(X)$. This lemma follows from the fact that, if $\vec{\sigma}^s$ is a Nash equilibrium of $\Gamma(X)$, it gives more utility to the receiver than $\vec{\sigma}^r$.
\end{proof}

\subsection{Computation of $X^\Gamma_r$}\label{sec:algorithm}

In this section we show how to compute a receiver-optimal filter. As discussed in Section~\ref{sec:proof-1-sender}, our aim is to find a filter $X^\Gamma_r$ such that $\vec{\sigma}^s$ is incentive-compatible for the receiver in $\Gamma(X^\Gamma_r)$ and such that the expected utility for the receiver with $\vec{\sigma}^s$ is as large as possible. We begin by showing an algorithm that runs in $O(k \log k)$ time, where $k$ is the number of states (i.e., the size of $\Omega$), and then we show the proof of correctness for the algorithm provided.

\subsubsection{An $O(k \log k)$ Algorithm}

For our algorithm, we restrict our search to \emph{binary} filters (i.e., filters that only send signals in $\{0,1\}$) and later, in Lemma~\ref{lem:simplification}, we show that this is done without loss of generality. With this restriction, we can describe possible filter candidates $X$ by a function $X_*$ that maps each state $\omega$ to the probability that $X(\omega)$ outputs $0$. Note that, without loss of generality, we can also restrict our search to filters in which the sender prefers action $0$ on signal $0$ and prefers action $1$ on signal $1$ (otherwise we can re-label the signals). If a filter satisfies this condition, we say that it is incentive-compatible for the sender. Analogously, if the receiver prefers action $0$ on signal $0$ and prefers action $1$ on signal $1$, we say that $X$ is incentive-compatible for the receiver. This shows that, without loss of generality, we can search for the binary filter that (a) is incentive-compatible for both the sender and the receiver, and (b) that gives the most utility for the receiver.

Before we start, let $\Omega^0$ (resp., $\Omega^1$) denote the set of states in which both the sender and the receiver prefer $0$ (resp., both prefer $1$) and let $\Omega_0^1$ (resp., $\Omega_1^0$) denote the set of states in which the sender strictly prefers $0$ and the receiver strictly prefers $1$ (resp., the sender strictly prefers $1$ and the receiver strictly prefers $0$). The following algorithm outputs the optimal filter $X^{OPT}$ for the receiver. 

\begin{enumerate}
\item \textbf{Step 1:} Set $(X_r^\Gamma)_*(\omega) = 1$ for all $\omega \in \Omega^0$.
\item \textbf{Step 2:} Set $(X_r^\Gamma)_*(\omega) = 0$ for all $\omega \in \Omega^1$.
\item \textbf{Step 3:} Sort all states $\omega \in \Omega_{0}^1 \cup \Omega_1^0$ according to the value $\frac{u_r(\omega, 0) - u_r(\omega, 1)}{u_s(\omega, 1) - u_s(\omega, 0)}$. Let $\omega^1, \omega^2, \ldots, \omega^{k'}$ be the resulting list.
\item \textbf{Step 4:} Set $(X_r^\Gamma)_*(\omega) = 1$ for all $\omega \in \Omega_1^0$ and $(X_r^\Gamma)_*(\omega) = 0$ for all $\omega \in \Omega_0^1$. If the resulting filter is incentive-compatible for the sender, output $X_r^\Gamma$.
\item \textbf{Step 5:} For $i = 1, 2, \ldots, k'$ do the following. If $(X_r^\Gamma)_*(\omega^i)$ is set to $1$ (resp., to $0$), set it to $0$ (resp., to 1). If the resulting filter is incentive-compatible for the sender, find the maximum (resp., the minimum) $q$ such that setting $(X_r^\Gamma)_*(\omega^i)$ to $q$ is incentive-compatible for the sender (we show in Section~\ref{sec:amortized} that it can be computed in amortized linear time). If the resulting filter  is incentive-compatible for the receiver, output $X_r^\Gamma$, otherwise output the constant filter $(X_r^\Gamma)_* \equiv 0$.
\end{enumerate}

It is important to note that the algorithm always terminates, since if we get to $k'$ in step 5, the resulting filter will always output the signal that the sender prefers. In Section~\ref{sec:amortized} we show how to compute $q$ and check the incentive-compatibility of the sender and of the receiver in amortized linear time, and in Section~\ref{sec:correctness} we show that the algorithm indeed produces the correct output.

\subsubsection{Checking Incentive-Compatibilities and Computing $q$ in Amortized Linear Time}\label{sec:amortized}

In this section we show how to check the players' incentive-compatibilities and how to compute $q$ in amortized linear time. For this purpose, let $d_i^s$ (resp., $d_i^r$) denote the difference in utility for the sender (resp., for the receiver) between playing $0$ and playing $1$. More precisely, $d_i^t := u_t(\omega^i, 0) - u_t(\omega^i, 1)$ for $t \in \{s,r\}$. Then, given a binary filter $X$, $\vec{\sigma}^s$ is incentive-compatible for the sender and the receiver in $\Gamma^F_X$ if and only if 
\begin{equation}\label{eq:ic-s}
\begin{array}{ll}
\sum_i p(\omega^i) \cdot d_i^t \cdot X_*(\omega^i) \ge 0\\
\sum_i p(\omega^i) \cdot d_i^t \cdot (1 - X_*(\omega^i)) \le 0
\end{array}
\end{equation}
for $t \in \{s, r\}$. The first and second equations state that the sender (resp., the receiver) prefers $0$ to signal $0$ and prefers $1$ on signal $1$, respectively.

It is straightforward to check that the incentive-compatibility equations for the sender monotonically increase and decrease, respectively, whenever we iterate in the fifth step of the algorithm. Therefore, we can simply pre-compute the following sums 
for $t \in \{s,r\}$, $b \in \{0,1\}$ and $i \in \{1,2,\ldots, k'\}$. 

$$\begin{array}{l}
Y_b^t := \sum_{\omega \in \Omega^b} p(\omega) \cdot (u_t(\omega, 0) - u_t(\omega,1))\\
W_{i,b}^{t} := \sum_{j < i, \omega^j \in \Omega_b^{1-b}} p(\omega^j) \cdot d_j^t\\
X_{i,b}^{t} := \sum_{j > i, \omega^j \in \Omega_b^{1-b}} p(\omega^j) \cdot d_j^t.
\end{array}$$

\commentout{
and their analogues for the receiver:

$$\begin{array}{l}
Y_0^r := \sum_{\omega \in \Omega_0^0} p(\omega) \cdot (u_r(\omega, 0) - u_r(\omega,1))\\
Y_1^r := \sum_{\omega \in \Omega_1^1} p(\omega) \cdot (u_r(\omega, 0) - u_r(\omega, 1))\\
W_{i,0}^{(r,s)} := \sum_{j < i, \omega^j \in \Omega_0^1} p(\omega^j) \cdot (u_r(\omega^j, 0) - u_r(\omega^j, 1))\\
W_{i,1}^{(r,s)} := \sum_{j < i, \omega^j \in \Omega_1^0} p(\omega^j) \cdot (u_r(\omega^j, 0) - u_r(\omega^j, 1))\\
W_{i,0}^{(r,r)} := \sum_{j > i, \omega^j \in \Omega_0^1} p(\omega^j) \cdot (u_r(\omega^j, 0) - u_r(\omega^j, 1))\\
W_{i,1}^{(r,r)} := \sum_{j > i, \omega^j \in \Omega_1^0} p(\omega^j) \cdot (u_r(\omega^j, 0) - u_r(\omega^j, 1)).
\end{array}$$
}

Note that these sums can be computed in amortized linear time over the total number of states. Once these sums are computed, if we are in the $i$th iteration of Step 5, to find $q$ we can check in constant time if the solution of any of the following linear equations is in $[0,1]$:

$$\begin{array}{l}
Y_0^s + W_{i,0}^{s} + p(\omega^i) \cdot q \cdot d_i^s + X_{i,0}^{s} = 0\\
Y_1^s + W_{i,1}^{s} + p(\omega^i) \cdot (1-q) \cdot d_i^s+ X_{i,1}^{s} = 0.\\
\end{array}$$

If there exists such a solution $q$, to check that it is incentive-compatible for both players, we should simply check if the equations in ~\ref{eq:ic-s} are satisfied, which can be rewritten as $Y_0^t + W_{i,0}^{t} + p(\omega^i) \cdot q \cdot d_i^t + X_{i,0}^{t} \ge 0$ and $Y_1^t + W_{i,1}^{t} + p(\omega^i) \cdot (1-q) \cdot d_i^t + X_{i,1}^{t} \le 0$, respectively. 
\commentout{
following inequalities are satisfied:

$$\begin{array}{l}
Y_0^s + W_{i,0}^{s} + p(\omega^i) \cdot q \cdot d_i^s + X_{i,0}^{s} \ge 0\\
Y_1^s + W_{i,1}^{s} + p(\omega^i) \cdot (1-q) \cdot d_i^s + X_{i,1}^{s} \le 0.\\
Y_0^r + W_{i,0}^{r} + p(\omega^i) \cdot q \cdot d_i^r + X_{i,0}^{r} \ge 0\\
Y_1^r + W_{i,1}^{r} + p(\omega^i) \cdot (1-q) \cdot d_i^r + X_{i,1}^{r} \le 0.\\
\end{array}$$
}
Note that all of these computations can be performed in constant time if the necessary sums are pre-computed. Also, the whole process runs in $O(k \log k)$ time since it takes $O(k \log k)$ operations to sort the elements of $\Omega_0^1 \cup \Omega_1^0$, $O(k)$ operations to compute the sums, and $O(1)$ operations in each iteration of Step 5. In the next section, we show that the algorithm's output is correct. 

\subsection{Proof of Correctness of $\pi$}\label{sec:correctness}

In this section, we show that the algorithm $\pi$ presented in Section~\ref{sec:algorithm} is correct. We begin by showing that we can indeed restrict our search to binary filters.

\begin{lemma}\label{lem:simplification}
Let $X$ be a filter such that $\vec{\sigma}^s$ is incentive-compatible for the receiver in $\Gamma(X)$. Then, there exists a binary filter $X'$ such that 
\begin{itemize}
\item[(a)] $\vec{\sigma}^s$ is incentive-compatible for the receiver in $\Gamma(X')$.
\item[(b)] With strategy profile $\vec{\sigma}^s$, the expected utility of the receiver in $\Gamma(X)$ and $\Gamma(X')$ is identical.
\end{itemize}
\end{lemma}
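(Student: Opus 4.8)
The plan is to build $X'$ by \emph{pooling} the signals of $X$ according to the recommendation that the sender issues under $\vec{\sigma}^s$. Let $P_0$ be the set of signals $x$ in the support of $X$ on which $\vec{\sigma}^s$ recommends action $0$ (that is, $x \in X_0$, or $x \in X_=$ and the receiver strictly prefers $0$), and let $P_1$ be the remaining signals. I then define the binary filter $X'$ by
$$(X')_*(\omega) := \sum_{x \in P_0} \Pr[x \longleftarrow X(\omega)],$$
so that $X'$ emits signal $0$ with exactly the probability that $X$ would have led the sender to recommend $0$. The entire argument rests on the fact that every quantity I must control is linear in the joint distribution over (state, signal): interchanging the order of summation and using $\Pr[x]\cdot u_t(x,a) = \sum_\omega p(\omega)\Pr[x \longleftarrow X(\omega)]\,u_t(\omega,a)$ yields, for each player $t \in \{s,r\}$, the identity
$$\sum_\omega p(\omega)\,(X')_*(\omega)\,\bigl(u_t(\omega,0) - u_t(\omega,1)\bigr) = \sum_{x \in P_0} \Pr[x]\,\bigl(u_t(x,0) - u_t(x,1)\bigr),$$
together with the analogous identity for $P_1$ in which $(X')_*$ is replaced by $1-(X')_*$ and the inequality direction is reversed.

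With these identities in hand, part (a) is essentially immediate. Receiver incentive-compatibility on signal $0$ of the binary filter $X'$ is the requirement $\sum_\omega p(\omega)(X')_*(\omega)(u_r(\omega,0)-u_r(\omega,1)) \ge 0$, which by the identity above (with $t=r$) equals $\sum_{x \in P_0}\Pr[x](u_r(x,0)-u_r(x,1)) \ge 0$; but this is exactly the statement that, pooled over $P_0$, the receiver weakly prefers the recommended action $0$, i.e.\ the hypothesis that $\vec{\sigma}^s$ is incentive-compatible for the receiver in $\Gamma(X)$. The symmetric computation handles signal $1$. For part (b), applying the identity once more I write the receiver's payoff under $\vec{\sigma}^s$ in $\Gamma(X)$ as $\sum_{x\in P_0}\Pr[x]u_r(x,0) + \sum_{x\in P_1}\Pr[x]u_r(x,1) = \sum_\omega p(\omega)\bigl[(X')_*(\omega)u_r(\omega,0) + (1-(X')_*(\omega))u_r(\omega,1)\bigr]$, and the same closed form arises in $\Gamma(X')$ provided the recommendation structure there matches the pooling.

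Closing part (b) therefore requires checking that $\vec{\sigma}^s$ in $\Gamma(X')$ really recommends $0$ on signal $0$ and $1$ on signal $1$. Applying the identity with $t=s$ and using that every $x \in P_0$ has $u_s(x,0)-u_s(x,1)\ge 0$ (strict if $x \in X_0$, zero if $x \in X_=$) shows the sender weakly prefers $0$ on signal $0$, and symmetrically she weakly prefers $1$ on signal $1$, so the recommendations agree in the strict case. The one delicate point, which I expect to be the main obstacle, is the tie-breaking: when the sender is indifferent on a pooled signal, $\vec{\sigma}^s$ defers to the receiver's preference, so I must argue the recommendation still matches. This is resolved by observing that part (a) already guarantees the receiver weakly prefers the intended action on that signal, and that a signal on which both players are indifferent contributes identically to the receiver's payoff under either action; hence the two recommendation structures coincide up to payoff-irrelevant ties, and the receiver's expected utility is genuinely preserved. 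Finally, since $\Omega$ is finite and the utilities are bounded, all the (possibly countably infinite) sums over $P_0$ and $P_1$ converge absolutely, which justifies the interchanges of summation used throughout.
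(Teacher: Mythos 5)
Your proposal is correct and follows essentially the same route as the paper: the paper's proof also merges the signals of $X$ into two pooled signals according to the action the sender would recommend under $\vec{\sigma}^s$, and observes that by construction (i.e., by linearity of the conditional expected utilities) both players' preferences on the pooled signals, the receiver's incentive constraints, and the expected payoffs are preserved. Your write-up is in fact more explicit than the paper's on two points the paper dispatches with ``by construction''---the summation-interchange identity and the tie-breaking when the sender is indifferent on a pooled signal---but these are elaborations of the same argument, not a different one.
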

\begin{proof}
Recall that $\vec{\sigma}^s$ is a strategy profile in which, for each possible signal $x$, the sender sends its preferred action and then the receiver plays whatever is sent by the sender. This means that, if $\vec{\sigma}$ is incentive-compatible for the receiver in $\Gamma(X)$, then we can merge all signals in which the sender prefers $0$, and also merge all signals in which the sender prefers $1$. More precisely, given filter $X$, let $X_0$ and $X_1$ be the sets of signals in which the sender prefers $0$ and $1$ respectively. Consider a filter $X'$ that sends signal $0$ whenever $X$ would send a signal in $X_0$, and sends signal $1$ whenever $X$ would send a signal in $X_1$. In $\Gamma(X')$, by construction, both the sender and the receiver prefer action $0$ on signal $0$ and action $1$ on signal $1$. This means that $\vec{\sigma}^s$ is a Nash equilibrium of $\Gamma(X')$. Moreover, again by construction, the expected utility of the receiver (and the sender) when playing $\vec{\sigma}^s$ in $\Gamma(X')$ is identical to the one they'd get in $\Gamma(X)$.
\end{proof}

Recall that binary filters $X$ can be described by a function $X_* : \Omega \longrightarrow [0,1]$ that maps each state $\omega \in \Omega$ to the probability that $X(\omega)$ assigns to $0$. 
Because of Lemma~\ref{lemma:char}, our aim is to find which values in $[0,1]$ we should assign to each element in $\Omega$. The following lemmas characterizes these values.

\begin{lemma}\label{lem:char-values1}
There exists a filter $X_r^\Gamma$ that maximizes the utility for the receiver such that
\begin{itemize}
\item[(a)] $(X_r^\Gamma)_*(\omega) = 1$ for all $\omega \in \Omega_0^0$.
\item[(b)] $(X_r^\Gamma)_*(\omega) = 0$ for all $\omega \in \Omega_1^1$.
\item[(c)] At most one state $\omega \in \Omega_0^1 \cup \Omega_1^0$ satisfies that $(X_r^\Gamma)_*(\omega) \not \in \{0,1\}$. 
\end{itemize}
\end{lemma}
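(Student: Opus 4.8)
The plan is to show that the receiver's optimal filter can be taken to assign deterministic values on the "aligned" states $\Omega_0^0$ and $\Omega_1^1$, and to use at most one fractional value among the "conflicting" states $\Omega_0^1 \cup \Omega_1^0$. The strategy is a perturbation / exchange argument: start from any receiver-optimal filter $X$ (under which $\vec{\sigma}^s$ is incentive-compatible for the receiver, which by Lemma~\ref{lem:simplification} we may assume is binary, i.e. described by $X_* : \Omega \to [0,1]$), and massage it into one of the claimed form without decreasing the receiver's utility and without breaking the two incentive-compatibility constraints recorded in Equation~\eqref{eq:ic-s}.

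**First I would handle parts (a) and (b).** Observe that the receiver's expected utility under $\vec{\sigma}^s$ is linear in the $X_*(\omega)$: each state $\omega$ contributes $p(\omega)\bigl(X_*(\omega)\,u_r(\omega,0) + (1-X_*(\omega))\,u_r(\omega,1)\bigr)$, since on signal $0$ the sender reports $0$ and the receiver follows. For $\omega \in \Omega_0^0$ both players strictly prefer $0$, so $d_\omega^r = u_r(\omega,0)-u_r(\omega,1) > 0$; pushing $X_*(\omega)$ up to $1$ strictly increases the receiver's term and relaxes both sender constraints in~\eqref{eq:ic-s} (since $d_\omega^s > 0$ as well), so it cannot break feasibility. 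Symmetrically, for $\omega \in \Omega_1^1$ both prefer $1$, so setting $X_*(\omega) = 0$ is optimal and feasibility-preserving. Thus we may assume (a) and (b) hold at the optimum.

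**The main work is part (c),** and this is where I expect the real obstacle. After fixing the aligned states, the only free variables are the $X_*(\omega)$ for $\omega \in \Omega_0^1 \cup \Omega_1^0$, i.e. the conflicting states where sender and receiver disagree. The remaining problem is a linear program: maximize a linear objective $\sum_\omega p(\omega) d_\omega^r X_*(\omega) + (\text{const})$ over the box $[0,1]$ intersected with the two sender constraints of~\eqref{eq:ic-s} (the receiver constraints are what we are ultimately trying to satisfy via $\vec{\sigma}^s$, but on the conflicting states the binding geometry comes from the sender). The key point is that the two sender inequalities, $\sum_i p(\omega^i) d_i^s X_*(\omega^i) \ge 0$ and $\sum_i p(\omega^i) d_i^s (1 - X_*(\omega^i)) \le 0$, are in fact \emph{the same} linear constraint $\sum_i p(\omega^i) d_i^s X_*(\omega^i) = c$ viewed as two halves (the first says the quantity is $\ge 0$, relative to the aligned-state contributions; after substituting the fixed values they combine into a single effective equality or a single effective inequality on the conflicting variables). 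So the feasible region is a box cut by essentially one linear equation, a polytope of dimension $k'-1$ whose vertices have at most one coordinate strictly inside $(0,1)$.

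**To finish,** I would invoke the standard fact that a linear objective over such a region attains its maximum at a vertex, and argue that every vertex has at most one fractional coordinate. Concretely: a point of the box $[0,1]^{k'}$ intersected with a single hyperplane is a vertex only if all but at most one of the box constraints are tight, forcing all but one coordinate into $\{0,1\}$. Hence there is an optimal $X_r^\Gamma$ with at most one $\omega \in \Omega_0^1 \cup \Omega_1^0$ having $(X_r^\Gamma)_*(\omega) \notin \{0,1\}$, giving (c). The subtlety to be careful about is that I must verify the \emph{receiver} incentive-compatibility of $\vec{\sigma}^s$ is maintained throughout; the cleanest route is to note that maximizing the receiver's utility subject only to sender feasibility already yields a filter under which $\vec{\sigma}^s$ is receiver-incentive-compatible whenever the optimum value exceeds what $\vec{\sigma}^r$ delivers, so the vertex we select lies in the intended region. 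The genuine difficulty is thus packaging the two-sided constraints of~\eqref{eq:ic-s} into a single effective linear constraint so that the vertex-sparsity argument applies cleanly; the rest is routine linear-programming geometry.
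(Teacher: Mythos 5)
Your proposal is correct, and for part (c) it takes a genuinely different route from the paper's. Parts (a) and (b) coincide with the paper's argument: monotone moves on aligned states improve the receiver's (linear) objective while relaxing both constraints of Equation~\ref{eq:ic-s}. For (c), the paper uses a local exchange argument: given two fractional states $\omega_i, \omega_j$, it perturbs $X_*(\omega_i)$ and $X_*(\omega_j)$ in proportion to $d_j^s$ and $-d_i^s$, which leaves the sender's side unchanged and shifts the receiver's utility by $\epsilon\,(d_j^s d_i^r - d_i^s d_j^r)$, contradicting optimality by choosing the sign of $\epsilon$. You instead argue globally via LP vertex sparsity, and your key structural claim is right and worth making explicit: writing $T = \sum_i p(\omega_i)\, d_i^s\, X_*(\omega_i)$ and $D = \sum_i p(\omega_i)\, d_i^s$ (a constant independent of the filter), the second constraint of Equation~\ref{eq:ic-s} reads $D - T \le 0$, so the two sender constraints are the two lower bounds $T \ge 0$ and $T \ge D$, i.e.\ the single halfspace $T \ge \max(0, D)$; a vertex of $[0,1]^{k'}$ cut by one hyperplane has at most one fractional coordinate, and a linear objective attains its maximum at a vertex. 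Your closing remark about receiver incentive-compatibility is also sound, and in fact exact rather than merely plausible: the receiver's utility under $\vec{\sigma}^s$ equals $\sum_i p(\omega_i)\, u_r(\omega_i, 1) + R$ with $R = \sum_i p(\omega_i)\, d_i^r\, X_*(\omega_i)$, and receiver IC is precisely $R \ge \max(0, D_r)$, i.e.\ beating the no-information value of $\vec{\sigma}^r$; hence maximizing subject to sender feasibility alone automatically recovers receiver IC whenever any filter IC for both players exists. Comparing the two approaches: your vertex argument is actually tighter than the paper's at the degenerate point $d_j^s d_i^r = d_i^s d_j^r$, where the paper's perturbation yields no strict improvement and hence no contradiction with optimality (one must additionally slide along the flat direction to a boundary, a step the paper glosses over), whereas ``optimum attained at a vertex'' absorbs such degeneracies for free. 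On the other hand, the paper's exchange computation is not wasted effort there: the sign of $d_j^s d_i^r - d_i^s d_j^r$ is exactly the ratio comparison $\frac{u_r(\omega,0)-u_r(\omega,1)}{u_s(\omega,1)-u_s(\omega,0)}$ used to sort states in Step~3 of the algorithm, so it is the engine of the refinement in Lemma~\ref{lem:char-values2}; your route proves the present lemma more cleanly but you would still need the pairwise exchange for the ordering result that follows.
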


\begin{proof}
Given a binary filter $X$, if we increase $X_*(\omega_i)$ by $\epsilon$, the expected utility of the receiver increases by $\epsilon \cdot  p(\omega_i) \cdot d_i^r$. This means that, if $d_i^s \ge 0$ and $d_i^r \ge 0$, setting $X_*(\omega_i)$ to $1$ increases the receiver's utility and preserves the incentive-compatibility constraints (see Equation~\ref{eq:ic-s}). Analogously, the same happens by setting $X_*(\omega_i)$ to $0$ when $d_i^s \le 0$ and $d_i^r \le 0$. This proves (a) and (b).

To prove (c) suppose that there exist two states $\omega_i$ and $\omega_j$ such that $X_*^{OPT}(\omega_i), X_*^{OPT}(\omega_j) \not \in \{0,1\}$. Then, because of (a) and (b) we can assume without loss of generality that $\omega_i, \omega_j \in \Omega_0^1 \cup \Omega_1^0$. 
Therefore, if we increase $X_*(\omega_i)$ by $\epsilon \cdot d_j^s$ and decrease $X_*(\omega_j)$ by $\epsilon \cdot d_j^s$, we'd have that the sender's utility and incentive-compatibility constraints remain unchanged, but the receiver's utility increases by $\epsilon \cdot  (d_j^s d_i^r - d_i^s d_j^r)$ (note that this value can be negative). This means that if we choose an $\epsilon$ that is small enough and is of the same sign as $d_j^s d_i^r - d_i^s d_j^r$, not only the expected utility of the receiver increases, but also the values $\sum_i p(\omega_i) \cdot d_i^r \cdot X_*(\omega_i)$ and $\sum_i p(\omega_i) \cdot d_i^r \cdot (1 - X_*(\omega_i))$ increase and decrease respectively. This contradicts the fact that the filter $X_r^\Gamma$ is optimal for the receiver.
\end{proof}

Lemma~\ref{lem:char-values1} shows that we can assign $(X_r^\Gamma)_*(\omega) = 1$ for all $\omega \in \Omega_0^0$, $(X_r^\Gamma)_*(\omega) = 0$ for all $\omega \in \Omega_0^0$, and  either probability $1$ or probability $0$ to all but at most one of the remaining states. Ideally, we would like to assign probability $1$ to all states in $\Omega_0^1$ and probability $0$ to all states in $\Omega_1^0$, since this guarantees that the receiver gets the maximum possible utility. However, this may not always be incentive-compatible for the sender, which means that we may have to assign to some of the states the probability that the sender prefers, as opposed to the probability that the receiver prefers. The following lemma characterizes these states.

\begin{lemma}\label{lem:char-values2}
Given an information transmission game $\Gamma^{F} = (A, \Omega, p, u)$, let $\omega^1, \omega^2, \ldots, \omega^{\ell}$ be the states in $\Omega_0^1 \cup \Omega^0_1$ sorted by $\frac{u_r(\omega, 0) - u_r(\omega, 1)}{u_s(\omega, 1) - u_s(\omega, 0)}$. Then, there exists a binary filter $X_r^\Gamma$ that is optimal for the receiver and a value $\ell' \le \ell$ such that:
\begin{itemize}
\item[(a)] $(X_r^\Gamma)_*(\omega) = 1$ for all $\omega \in \Omega_0^0$.
\item[(b)] $(X_r^\Gamma)_*(\omega) = 0$ for all $\omega \in \Omega_1^1$.
\item[(c)] For $i < \ell'$, $(X_r^\Gamma)_*(\omega^i) =  0$ if $\omega \in \Omega_1^0$ and $(X_r^\Gamma)_*(\omega^i) =  1$ if $\omega \in \Omega_0^1$.
\item[(d)] For $i > \ell'$, $(X_r^\Gamma)_*(\omega^i) =  0$ if $\omega \in \Omega_0^1$ and $(X_r^\Gamma)_*(\omega^i) =  0$ if $\omega \in \Omega_1^0$.
\end{itemize}
\end{lemma}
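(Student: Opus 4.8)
The plan is to recognize the receiver's problem as a fractional knapsack and to derive the threshold structure by the exchange argument already used in Lemma~\ref{lem:char-values1}. First I would shrink the search space: by Lemma~\ref{lem:char-values1} I may assume $(X_r^\Gamma)_*(\omega)=1$ on $\Omega_0^0$ and $(X_r^\Gamma)_*(\omega)=0$ on $\Omega_1^1$ (this gives (a) and (b) for free), so the only free variables are the values $X_*(\omega)$ for $\omega\in\Omega_0^1\cup\Omega_1^0$. Writing $d_\omega^t:=u_t(\omega,0)-u_t(\omega,1)$, the receiver's utility under $\vec{\sigma}^s$ is the linear function $\mathrm{(const)}+\sum_\omega p(\omega)\,d_\omega^r\,X_*(\omega)$. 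For the constraints, I would note that the two sender incentive-compatibility inequalities of Equation~\ref{eq:ic-s} (for $t=s$) are both lower bounds on the \emph{same} quantity $\sum_\omega p(\omega)\,d_\omega^s\,X_*(\omega)$: the inequality $\sum p\,d^s(1-X_*)\le 0$ is equivalent to $\sum p\,d^s X_*\ge \sum_\omega p(\omega)d_\omega^s$, so sender incentive-compatibility holds iff $\sum_\omega p(\omega)\,d_\omega^s\,X_*(\omega)\ge \max\{0,\sum_\omega p(\omega)d_\omega^s\}$. Thus the receiver's task is to maximize a linear objective subject to a single linear inequality and the box constraints $X_*(\omega)\in[0,1]$, i.e.\ a fractional knapsack.

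Next I would identify the sorting key with the exchange rate of this knapsack. Consider two states $\omega^i,\omega^j\in\Omega_0^1\cup\Omega_1^0$ and the weight-preserving perturbation from the proof of Lemma~\ref{lem:char-values1}: move $X_*(\omega^i)$ and $X_*(\omega^j)$ by amounts chosen so that $\sum p\,d^s X_*$ is unchanged, which leaves both sender incentive-compatibility inequalities intact since they depend only on that sum. The induced change in the receiver's utility is proportional to $d_j^s d_i^r - d_i^s d_j^r$, whose sign is governed by the difference of the sorting keys $\tfrac{u_r(\omega,0)-u_r(\omega,1)}{u_s(\omega,1)-u_s(\omega,0)}=-d_\omega^r/d_\omega^s$ of the two states. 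Consequently, under a weight-preserving exchange, pushing probability toward the receiver's preferred signal on the state with the \emph{larger} key never decreases the receiver's utility.

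The threshold structure then follows by iterating this exchange. I would start from an optimal filter satisfying Lemma~\ref{lem:char-values1}(a)--(c), so at most one coordinate on $\Omega_0^1\cup\Omega_1^0$ is fractional. If the filter is not of threshold form, there are indices $i<j$ (so the key of $\omega^i$ is at most that of $\omega^j$) with $\omega^i$ not at its sender-preferred value and $\omega^j$ not at its receiver-preferred value. I apply the weight-preserving exchange that moves $\omega^j$ toward the receiver's preference and $\omega^i$ toward the sender's preference, pushing until one of the two reaches a box boundary. By the previous paragraph this keeps the filter optimal and feasible, and it strictly reduces the number of coordinates disagreeing with the greedy (threshold) assignment. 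After finitely many such steps I obtain an optimal filter in which every state whose key is below a cutoff takes its sender-preferred value, every state whose key is above it takes its receiver-preferred value, and at most one boundary state (the cutoff $\ell'$) remains fractional, which is exactly what parts (c)--(d) assert.

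The conceptual content is short, so the main obstacle is the sign bookkeeping: one must verify that the trade-off ``receiver utility gained per unit of sender-incentive slack spent'' is captured uniformly, and with the correct orientation, by the single key $\tfrac{u_r(\omega,0)-u_r(\omega,1)}{u_s(\omega,1)-u_s(\omega,0)}$ across both $\Omega_0^1$ (where $d^s>0$) and $\Omega_1^0$ (where $d^s<0$), so that ``larger key'' consistently means ``switch to the receiver's preference later.'' A secondary point is to confirm that each exchange stays feasible within $[0,1]$ until the threshold form is reached and that the process terminates; both follow from pushing each exchange to a box boundary, but they must be stated with enough care that the single surviving fractional coordinate lands precisely at the cutoff $\ell'$.
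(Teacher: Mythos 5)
Your proof is correct and is essentially the paper's own argument: after pinning down $\Omega^0$ and $\Omega^1$ via Lemma~\ref{lem:char-values1}, you run the same weight-preserving pairwise exchange along the key $\frac{u_r(\omega,0)-u_r(\omega,1)}{u_s(\omega,1)-u_s(\omega,0)}$ that the paper uses to establish its ordering properties (s1)--(s4), and the sign bookkeeping you flag does work out uniformly (in both $\Omega_0^1$ and $\Omega_1^0$ the key equals the receiver-utility cost per unit of sender-incentive slack, so low-key states are conceded to the sender first). Your one repackaging is cosmetic but welcome: collapsing the two sender incentive-compatibility inequalities of Equation~\ref{eq:ic-s} into the single lower bound $\sum_\omega p(\omega)\,d_\omega^s\,X_*(\omega)\ge\max\bigl\{0,\sum_\omega p(\omega)\,d_\omega^s\bigr\}$ replaces the paper's four-case analysis with one uniform fractional-knapsack exchange, and by phrasing the perturbation as ``amounts chosen so that $\sum_\omega p(\omega)\,d_\omega^s\,X_*(\omega)$ is unchanged'' you implicitly correct a small slip in the paper's displayed perturbations, which omit the probability weights $p(\omega)$.
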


Lemma~\ref{lem:char-values2} states that, if we sort the states in which the  sender and the receiver have different preferences by the ratio between how much the receiver prefers $0$ with respect to $1$ and how much the sender prefers $1$ with respect to $0$ (note that these ratios are always positive), we get that we can split these states into two contiguous blocks in which, in the first block, we assign them the probability that is most convenient for the sender, and for the second block we assign them the probability that is most convenient for the receiver. The proof follows the lines of that of Lemma~\ref{lem:char-values1} part (c).

\begin{proof}[Proof of Lemma~\ref{lem:char-values2}]
Given two states $\omega, \omega' \in \Omega_0^1 \cup \Omega^0_1$, we say that $\omega < \omega'$ if $\frac{u_r(\omega, 0) - u_r(\omega, 1)}{u_s(\omega, 1) - u_s(\omega, 0)} < \frac{u_r(\omega', 0) - u_r(\omega', 1)}{u_s(\omega', 1) - u_s(\omega', 0)}$. Using the same argument as in the Proof of Lemma~\ref{lem:char-values1}, we can assume w.l.o.g. that (a) and (b) are satisfied. Thus, it only remains to show (c) and (d), which follow from the fact that there exists an there exists an optimal binary filter $X_r^\Gamma$ for the receiver such that:
\begin{itemize}
\item[(s1)] If $\omega, \omega' \in \Omega_1^0$, $\omega < \omega'$ and $(X_r^\Gamma)_*(\omega) > 0$, then $(X_r^\Gamma)_*(\omega') = 1$.
\item[(s2)] If $\omega, \omega' \in \Omega_0^1$, $\omega < \omega'$ and $(X_r^\Gamma)_*(\omega) < 1$, then $(X_r^\Gamma)_*(\omega') = 0$.
\item[(s3)] If $\omega \in \Omega_1^0$, $\omega' \in \Omega_0^1$, $\omega < \omega'$ and $(X_r^\Gamma)_*(\omega) > 0$, then $(X_r^\Gamma)_*(\omega') = 0$.
\item[(s4)] If $\omega \in \Omega_0^1$, $\omega' \in \Omega_1^0$, $\omega < \omega'$ and $(X_r^\Gamma)_*(\omega) < 1$, then $(X_r^\Gamma)_*(\omega') = 1$.
\end{itemize}

We will prove (s1) and (s3), the proofs of (s2) and (s4) are analogous. 

\textbf{Proof of (s1):} Suppose that there exist $\omega, \omega' \in \Omega_1^0$ such that $\omega < \omega'$ and $(X_r^\Gamma)_*(\omega) > 0$, but $(X_r^\Gamma)_*(\omega') < 1$. If this happens, we can 
set $$
\begin{array}{l}
(X_r^\Gamma)_*(\omega) \longleftarrow (X_r^\Gamma)_*(\omega) + \epsilon (u_s(\omega', 0) - u_s(\omega', 1)) \\ 
(X_r^\Gamma)_*(\omega') \longleftarrow (X_r^\Gamma)_*(\omega) + \epsilon (u_s(\omega, 1) - u_s(\omega, 0)).
\end{array}$$
Since $u_s(\omega', 0) - u_s(\omega', 1)) < 0$ and $u_s(\omega, 0) - u_s(\omega, 1) < 0$, there exists some $\epsilon > 0$ that is small enough so that $X^{OPT}_*(\omega)$ and $(X_r^\Gamma)_*(\omega)$ stay between $0$ and $1$.
Moreover, as in the proof of Lemma~\ref{lem:char-values1} part (c), if we perform this change the sender's utility remains unchanged, but the receiver's utility increases by $$\Delta := \epsilon((u_s(\omega', 0) - u_s(\omega', 1))(u_r(\omega, 0) - u_r(\omega, 1)) + (u_s(\omega, 1) - u_s(\omega, 0))(u_r(\omega', 0) - u_r(\omega', 1))).$$

By assumption, we have that $$\frac{u_r(\omega, 0) - u_r(\omega, 1)}{u_s(\omega, 1) - u_s(\omega, 0)} < \frac{u_r(\omega', 0) - u_r(\omega', 1)}{u_s(\omega', 1) - u_s(\omega', 0)}$$ which, together with the fact that $u_s(\omega, 1) - u_s(\omega, 0) > 0$ and $u_s(\omega', 1) - u_s(\omega', 0)> 0$, implies that $\Delta \ge 0$ whenever $\epsilon > 0$.

\textbf{Proof of (s3):} The proof is almost identical to that of (s1). Suppose that there exist $\omega \in \Omega_1^0$, $\omega' \in \Omega_0^1$ such that $\omega < \omega'$, $(X_r^\Gamma)_*(\omega) > 0$ and $(X_r^\Gamma)_*(\omega')> 0$. In this case, we can again set 
$$
\begin{array}{l}
(X_r^\Gamma)_*(\omega) \longleftarrow (X_r^\Gamma)_*(\omega) + \epsilon (u_s(\omega', 1) - u_s(\omega', 0)) \\ 
(X_r^\Gamma)_*(\omega') \longleftarrow (X_r^\Gamma)_*(\omega) + \epsilon (u_s(\omega, 0) - u_s(\omega, 1)).
\end{array}$$
The only difference with the proof of (s1) is that, in this case, $u_s(\omega, 0) - u_s(\omega, 1) < 0$ and  $u_s(\omega', 0) - u_s(\omega', 1) > 0$. Again, there exists a sufficiently small $\epsilon > 0$ such that $(X_r^\Gamma)_*(\omega)$ and $(X_r^\Gamma)_*(\omega')$ remain between $0$ and $1$. Moreover, a straightforward computation shows that the sender's utility remains unchanged, but that the receiver's utility changes by $$\Delta := \epsilon((u_s(\omega', 1) - u_s(\omega', 0))(u_r(\omega, 0) - u_r(\omega, 1)) + (u_s(\omega, 0) - u_s(\omega, 1))(u_r(\omega', 0) - u_r(\omega', 1))).$$

Using that $$\frac{u_r(\omega, 0) - u_r(\omega, 1)}{u_s(\omega, 1) - u_s(\omega, 0)} < \frac{u_r(\omega', 1) - u_r(\omega', 0)}{u_s(\omega', 0) - u_s(\omega', 1)},$$ $u_s(\omega, 1) - u_s(\omega, 0) > 0$, and $u_s(\omega', 0) - u_s(\omega', 1)> 0$, we get that $\Delta \ge 0$ whenever $\epsilon > 0$.
\end{proof}

Lemma~\ref{lem:char-values2} shows that the algorithm provided in Section~\ref{sec:algorithm} outputs the correct solution as long as it computes the right value of $q$ in Step 5. The next lemma shows that this value is precisely the one that the algorithm finds. This completes the proof of correctness.

\begin{lemma}
Let $X_R^\Gamma$ be the filter that assigns $(X_R^\Gamma)_*(\omega) = 1$ for $\omega \in \Omega_0^0 \cup \Omega_1^0$ and $(X_R^\Gamma)_*(\omega) = 0$ for $\omega \in \Omega_1^1 \cup \Omega_0^1$. If $X_R^\Gamma$ is incentive-compatible for the sender, $(X_R^\Gamma)$ is the optimal filter for the receiver that is incentive-compatible for both players. Otherwise, all filters $X$ that are incentive-compatible for both players and are optimal for the receiver satisfy at least one of the following equations:

$$
\begin{array}{ll}
\sum_i p(\omega_i) \cdot d_i^s \cdot X_*(\omega_i) = 0\\
\sum_i p(\omega_i) \cdot d_i^s \cdot (1 - X_*(\omega_i)) = 0.
\end{array}
$$

\end{lemma}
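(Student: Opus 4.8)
The plan is to recognize the search for the receiver-optimal binary filter (among filters incentive-compatible for both players) as a linear program and read the conclusion off the complementary-slackness structure of its optimum. First I would record the relevant linear expressions. Writing $d_i^r := u_r(\omega_i,0)-u_r(\omega_i,1)$ and $d_i^s := u_s(\omega_i,0)-u_s(\omega_i,1)$, the receiver's payoff under $\vec{\sigma}^s$ equals $\sum_i p(\omega_i)u_r(\omega_i,1) + G(X)$ where $G(X):=\sum_i p(\omega_i)d_i^r X_*(\omega_i)$, so maximizing the receiver's utility is the same as maximizing the linear functional $G$ over $X_* \in [0,1]^k$. By Equation~\ref{eq:ic-s}, incentive-compatibility of the sender is exactly the pair of linear constraints $F(X)\ge 0$ and $F(X)\ge D$, where $F(X):=\sum_i p(\omega_i)d_i^s X_*(\omega_i)$ and $D:=\sum_i p(\omega_i)d_i^s$; the two displayed equations in the statement are precisely $F(X)=0$ and $F(X)=D$, i.e.\ the two sender constraints holding with equality.

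The key preliminary observation is that $X_R^\Gamma$ is exactly the unconstrained maximizer of $G$ over $[0,1]^k$: it sets $X_*(\omega_i)=1$ on every state with $d_i^r>0$ and $X_*(\omega_i)=0$ on every state with $d_i^r<0$, so the receiver plays her preferred action in every state and attains the largest payoff achievable by any filter. Since all states have strict receiver preference, this maximizer is unique, and $X_R^\Gamma$ is automatically incentive-compatible for the receiver because on each signal every state in its support shares the same strict receiver preference. This already settles the first half of the lemma: if $X_R^\Gamma$ is incentive-compatible for the sender it is feasible, and as it globally maximizes $G$ it is the receiver-optimal filter that is incentive-compatible for both players.

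For the second half I would argue by contradiction using a single-coordinate perturbation. Suppose $X_R^\Gamma$ is not sender-incentive-compatible and let $X$ be feasible (incentive-compatible for both) and receiver-optimal, yet satisfy neither displayed equation; then $F(X)\neq 0$ and $F(X)\neq D$, which together with feasibility forces both sender constraints to be strict, $F(X)>0$ and $F(X)>D$. Since $X$ is feasible while $X_R^\Gamma$ is not, $X\neq X_R^\Gamma$, and uniqueness of the maximizer gives $G(X)<G(X_R^\Gamma)$. Writing $G(X_R^\Gamma)-G(X)=\sum_i p(\omega_i)d_i^r\big((X_R^\Gamma)_*(\omega_i)-X_*(\omega_i)\big)$ as a sum of nonnegative terms, at least one term is strictly positive, and the corresponding state $\omega_j$ satisfies either ($d_j^r>0$ and $X_*(\omega_j)<1$) or ($d_j^r<0$ and $X_*(\omega_j)>0$). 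Nudging $X_*(\omega_j)$ by a small $\delta>0$ toward $(X_R^\Gamma)_*(\omega_j)$ strictly increases $G$; because $F$ has strict slack at $X$ and varies continuously, and because the receiver-incentive constraints are lower bounds on $G$ that are preserved when $G$ increases, the perturbed filter stays incentive-compatible for both players for $\delta$ small enough. This contradicts receiver-optimality of $X$, so one of the two equations must hold.

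The main obstacle I anticipate is bookkeeping rather than ideas: correctly translating both sender-incentive inequalities of Equation~\ref{eq:ic-s} into the single functional $F$, and verifying that the one-coordinate nudge keeps $X_*$ inside $[0,1]^k$ while respecting the sender \emph{and} receiver constraints simultaneously. The cleanest route is to note that the receiver-incentive constraints are lower bounds on $G$, hence automatically satisfied at any point whose $G$-value is at least that of the feasible $X$; this leaves only the box constraint at $\omega_j$ and the two sender constraints as possible obstructions, all of which tolerate a sufficiently small $\delta$ under the strict-slack assumption. A minor care point is the treatment of indifferent states, but under the standing partition of $\Omega$ into $\Omega_0^0,\Omega_1^1,\Omega_0^1,\Omega_1^0$ every state has strict receiver preference, which is exactly what makes $X_R^\Gamma$ the unique unconstrained maximizer of $G$ and drives the whole argument.
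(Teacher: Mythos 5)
Your proof is correct and follows essentially the same route as the paper's: recognize $X_R^\Gamma$ as the receiver's unconstrained optimum (settling the first case), and in the second case use the strict slack in both sender constraints to perturb a single coordinate of a purportedly optimal filter toward $X_R^\Gamma$, strictly improving the receiver while preserving feasibility, a contradiction. Your packaging of the sender constraints as $F(X)\ge 0$ and $F(X)\ge D$, and the observation that the receiver's own incentive constraints are lower bounds on the objective $G$ (hence automatically preserved by any $G$-increasing move), make explicit some details the paper's proof leaves implicit, but the underlying argument is the same.
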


\begin{proof}
Filter $X_R^\Gamma$ gives the maximum utility to the sender of all possible filters. Therefore, if it is incentive-compatible for the sender, it is the optimal for the receiver. Suppose instead that $X_R^\Gamma$ is not incentive-compatible for the sender but that an optimal filter $X_r^\Gamma$ satisfies 
\begin{equation}\label{eq:lemma-3}
\begin{array}{ll}
\sum_i p(\omega_i) \cdot d_i^s \cdot X_*(\omega_i) > 0\\
\sum_i p(\omega_i) \cdot d_i^s \cdot (1 - X_*(\omega_i)) < 0.
\end{array}
\end{equation}

Since $X_R^\Gamma$ is not IC for the sender, there exists a state $\omega \in \Omega_{0,1}$ such that $(X_r^\Gamma)_* > 0$ or a state $\omega \in \Omega_{1,0}$ such that $(X_r^\Gamma)_* < 1$. In the first case, we can decrease $(X_r^\Gamma)_*$ by a small value $\epsilon > 0$ such that Equation~\ref{eq:lemma-3} is still satisfied. By doing this, we increase the receiver's expected utility while obtaining a new filter that is still incentive-compatible for both players. The latter case is analogous except that we increase $(X_r^\Gamma)_*$ instead of decreasing it.
\end{proof}

\section{Constructing a Sender-Optimal Filter}\label{sec:sender-algorithm}

Constructing a sender-optimal filter $X_s^\Gamma$ in an information transmission game is analogous the one given in Section~\ref{sec:algorithm} but ``reversing'' the roles of the sender and the receiver. More precisely, the construction is as follows.

\begin{enumerate}
\item \textbf{Step 1:} Set $(X_s^\Gamma)_*(\omega) = 1$ for all $\omega \in \Omega^0$.
\item \textbf{Step 2:} Set $(X_s^\Gamma)_*(\omega) = 0$ for all $\omega \in \Omega^1$.
\item \textbf{Step 3:} Sort all states $\omega \in \Omega_{0}^1 \cup \Omega_1^0$ according to the value $\frac{u_s(\omega, 0) - u_s(\omega, 1)}{u_r(\omega, 1) - u_r(\omega, 0)}$. Let $\omega^1, \omega^2, \ldots, \omega^{k'}$ be the resulting list.
\item \textbf{Step 4:} Set $(X_s^\Gamma)_*(\omega) = 1$ for all $\omega \in \Omega_0^1$ and $(X_s^\Gamma)_*(\omega) = 0$ for all $\omega \in \Omega_1^0$. If the resulting filter is incentive-compatible for the receiver, output $X_s^\Gamma$.
\item \textbf{Step 5:} For $i = 1, 2, \ldots, k'$ do the following. If $(X_s^\Gamma)_*(\omega^i)$ is set to $1$ (resp., to $0$), set it to $0$ (resp., to 1). If the resulting filter is incentive-compatible for the receiver, find the maximum (resp., the minimum) $q$ such that setting $(X_s^\Gamma)_*(\omega^i)$ to $q$ is incentive-compatible for the receiver. If the resulting filter  is incentive-compatible for the sender, output $X_s^\Gamma$, otherwise output the constant filter $(X_s^\Gamma)_* \equiv 0$.
\end{enumerate}

The proof of correctness is analogous to the one shown in Section~\ref{sec:correctness}.

\section{Proof of Theorem~\ref{thm:two-senders}}\label{sec:proof-2-senders}

The proof of Theorem~\ref{thm:two-senders} follows from the following result of Arieli et al.~\cite{arieli2023mediated} describing the characterization of the best Nash equilibrium in information aggregation games with two senders, one receiver, binary actions and a mediator.

\begin{theorem}[\cite{arieli2023mediated}]\label{thm:mediator}
Let $\Gamma = \{S, A, \Omega, p, u\}$ be an information aggregation game with $S = \{s_1, s_2\}$ and $A = \{0,1\}$ in which the players can communicate with a third-party mediator. Given $i,j,k \in \{0,1\}$, let $\Omega_{i,j}^k$ be the set of states in which $s_1$ prefers action $i$, $s_2$ prefers action $j$, and the receiver prefers action $k$. Let also $\Omega_{i,j} := \Omega_{i,j}^0 \cup \Omega_{i,j}^1$. Consider the following six maps $M_1, M_2, M_3, M_4, M_5, M_6$ from $\Omega$ to $[0,1]$:

\begin{itemize}
    \item [(a)] $M_1(\omega) = 1$ for all $\omega \in \Omega_{0,0}^0$ and $M_1(\omega) = 0$ otherwise.
    \item [(b)] $M_2(\omega) = 0$ for all $\omega \in \Omega_{1,1}^1$ and $M_2(\omega) = 1$ otherwise.
    \item [(c)] $M_3(\omega) = 1$ for all $\omega \in \Omega_{0,0} \cup \Omega_{0,1}$ and $M_3(\omega) = 0$ otherwise.
    \item [(d)] $M_4(\omega) = 1$ for all $\omega \in \Omega_{0,0} \cup \Omega_{1,0}$ and $M_4(\omega) = 0$ otherwise.
    \item [(e)] $M_5(\omega) = 1$ for all $\omega \in \Omega$.
    \item [(f)] $M_6(\omega) = 0$ for all $\omega \in \Omega$.
\end{itemize}

Then, there exists an $i \in \{1,2,3,4,5,6\}$ and a Nash equilibrium of $\Gamma$ that maximizes the receiver's utility in which the function that maps each state to the probability that the receiver ends up playing $0$ is equal to $M_i$.
\end{theorem}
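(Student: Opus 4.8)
The plan is to reduce the claim to maximizing a linear functional over a polytope of implementable \emph{outcome functions}, and then to read off the optimum from a finite case analysis. Since each player's payoff in a state $\omega$ depends only on the action the receiver ends up taking, the sole payoff-relevant statistic of a mediated strategy profile is the map $x:\Omega\to[0,1]$ with $x(\omega)$ the on-path probability that the receiver plays $0$ in state $\omega$; this is exactly the object the six maps $M_1,\dots,M_6$ describe. By the revelation principle for communication equilibria it suffices to consider direct mediated mechanisms in which both senders report the state, the mediator issues a recommendation, and the receiver obeys. The receiver's objective then becomes $U_r(x)=\sum_{\omega}p(\omega)\bigl(x(\omega)u(r,\omega,0)+(1-x(\omega))u(r,\omega,1)\bigr)$, which is linear in $x$, and the theorem reduces to showing that $U_r$ attains its maximum over the set of implementable $x$ at one of $M_1,\dots,M_6$.

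I would next carve out the implementability constraints. Receiver obedience contributes two linear inequalities in $x$: conditioning on the recommendation $0$ (mass proportional to $p(\omega)x(\omega)$) the receiver must weakly prefer $0$, and symmetrically for recommendation $1$; using direct recommendations makes these two aggregate constraints the most permissive possible. The sender constraints are where the two-sender structure does the work. Because each sender observes the true state, any unilateral misreport makes the two reports disagree, an event of probability zero on the equilibrium path, which the mediator can map to a recommendation whose obedient on-path response is unfavorable to the deviator. The essential subtlety is that the mediator can only threaten an action that it is already recommending obediently somewhere on-path, so the set of credible punishments is coupled to the support of $x$ (e.g. action $1$ can be used to punish only when action $1$ is played with positive probability). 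I would make this precise by writing each sender's incentive constraint as the requirement that its on-path payoff dominate the worst payoff attainable among the currently-credible punishments.

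With the constraints in hand, I would partition the analysis according to which punishments are credible — both actions, only $0$, or only $1$ — and, within each case, solve the resulting linear program. Writing $U_r(x)=\text{const}+\sum_\omega p(\omega)x(\omega)\bigl(u(r,\omega,0)-u(r,\omega,1)\bigr)$ shows that, absent binding incentive constraints, an optimal $x$ should equal $1$ precisely on the cells $\Omega_{i,j}^0$ where the receiver prefers $0$; the incentive constraints then force deviations from this ideal only on the cells where a sender's preference conflicts with the receiver's. Reading the forced values off cell by cell against the partition $\{\Omega_{i,j}^k\}$ yields the six candidates: when no sender information can be credibly used the optimum degenerates to a constant map ($M_5$ or $M_6$); when the credible punishment lets the mediator defer to exactly one sender the optimum tracks that sender's preference ($M_3$ or $M_4$); and when obedience binds in the extreme the optimum plays the non-default action only under full unanimity ($M_1$ or $M_2$). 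Matching the cases to the maps completes the argument.

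The step I expect to be the main obstacle is the sender incentive analysis: pinning down exactly the harshest punishment that two senders plus a mediator can credibly inflict on a unilateral deviator subject to receiver obedience, and proving that the coupling between credible punishments and the support of $x$ produces a feasible region whose $U_r$-maximizer is always one of the six \emph{integral} maps. In particular, ruling out a strictly better fractional outcome and checking that the three-way case split is exhaustive — including the degenerate boundary cases where a preference cell is empty — is the delicate part; the revelation-principle reduction, the linearity of $U_r$, and the vertex argument are comparatively routine.
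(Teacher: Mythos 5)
You should know at the outset that the paper does not prove this statement at all: Theorem~\ref{thm:mediator} is imported verbatim from Arieli et al.~\cite{arieli2023mediated}, and the surrounding text only proves the separate, easier fact that the six outcome maps can be implemented \emph{without} a mediator (Proposition~\ref{prop:two-senders-optimal}). So there is no in-paper argument to compare yours against; your sketch has to stand on its own as a reconstruction of the cited result, and judged that way it has a genuine gap in exactly the place you flag.

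The reduction to outcome functions $x:\Omega\to[0,1]$, the revelation principle, the linearity of $U_r$, and the two aggregate obedience constraints are all fine. The problem is the punishment model and the final case analysis. Under Nash equilibrium, disagreement between the two reports is a probability-zero event: it does not perturb the receiver's conditional obedience incentives at on-path recommendations, and her response to never-sent recommendations is entirely unconstrained, so ``credible punishments are the actions in the support of $x$'' is not the right feasibility notion. The binding constraint is pairwise and anonymous: on disagreeing reports $(\omega',\omega)$ the mediator cannot tell which sender deviated, so a single punishment lottery $r(\omega',\omega)$ must \emph{simultaneously} give sender $1$ (if the truth is $\omega$) at most her on-path payoff under $x(\omega)$ and sender $2$ (if the truth is $\omega'$) at most hers under $x(\omega')$. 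With binary actions this produces coupling inequalities across states --- e.g.\ if $s_1$ prefers $1$ at $\omega$ and $s_2$ prefers $0$ at $\omega'$, a valid $r$ exists iff $(1-x(\omega))+x(\omega')\ge 1$, i.e.\ $x(\omega')\ge x(\omega)$ --- and it is these cross-state constraints, not a global support condition, that rule out the receiver's pointwise first-best and force the collapse to unanimity/dictator/constant maps. Your three-way split (both punishments credible / only $0$ / only $1$) is neither exhaustive for this constraint structure nor correctly aligned with the candidates: $M_1$ and $M_2$ arise from mutual sender veto power, not from receiver obedience ``binding in the extreme.'' Finally, the closing step --- ``reading the forced values off cell by cell yields the six candidates'' --- assumes precisely what is to be proved: the feasible polytope under aggregate obedience constraints has no reason to have only integral vertices (indeed, the one-sender analysis in Section~4 of this very paper yields optimal filters with one \emph{fractional} state), so establishing that the receiver-optimum is nonetheless always attained at one of six specific $\{0,1\}$-valued maps is the entire content of the theorem, and your outline defers rather than supplies it.
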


Intuitively, Theorem~\ref{thm:mediator} states that if there were no filters and the two senders had access to a third-party mediator, there exists a Nash equilibrium that is optimal for the receiver in which the either (a) the receiver only plays $0$ if all three players prefer $0$, (b) the sender plays $1$ if and only if all three players prefer $1$, (c) the receiver always plays what the first sender prefers, (d) the receiver always plays what the second sender prefers, (e) the receiver plays always 0, or (f) the receiver plays always 1. In the setting described in Section~\ref{sec:model}, players have no access to a mediator. However, we can easily check if the outcomes described in (a), (b), (c), (d), (e) or (f) are incentive-compatible for the receiver (i.e., if the receiver gets more utility with the outcome than when playing with no information), there exists a strategy in $\Gamma$ (without the mediator) that implements these outcomes.

For instance, suppose that we want to implement the outcome described in (a). Consider the strategy profile $\vec{\sigma}^{(0,0)}$ in which each sender sends $0$ to the receiver if and only if the realized state is in $\Omega_{0,0}^0$, and the receiver plays $0$ only if both signals are $0$. It is easy to check that this is incentive-compatible for the senders: if the realized state is indeed in $\Omega_{0,0}^0$, it is a best response for both to send $0$ since it guarantees that the receiver will play $0$. Moreover, if the realized state is not in $\Omega_{0,0}^0$, none of the senders gets any additional utility by defecting from the main strategy since the other sender will always send a non-zero signal (which implies that the receiver will play $1$). To implement (c), consider the strategy profile $\vec{\sigma}^{s_1}$ in which both senders send a binary signal $m \in \{0,1\}$ that is equal to $0$ if and only if the realized state is in $\Omega_{0,0} \cup \Omega_{0,1}$, and the receiver plays $0$ if and only if the signal from the first sender is $0$. Again, it is straightforward to check that this is incentive-compatible for the senders: it is a best-response for sender $1$ to send its preference, while it doesn't matter what sender $2$ sends since it will be ignored. To implement (e) (resp., (f)) consider the strategy profile $\vec{\sigma}^0$ (resp., $\vec{\sigma}^1$) in which the senders send signal $0$ regardless of the realized state and the receiver always plays $0$ (resp., $1$). It is easy to check that $\vec{\sigma}^0$ (resp., $\vec{\sigma}^1$) are incentive-compatible for the senders. 
Implementing the outcomes in (b) and (d) is analogous to implementing the outcomes in (a) and (c).
Since all outcomes that are implementable without a mediator are also implementable with a mediator, this implies the following proposition:

\begin{proposition}\label{prop:two-senders-optimal}
Let $\Gamma = \{S, A, \Omega, p, u\}$ be an information aggregation game with $S = \{s_1, s_2\}$ and $A = \{0,1\}$. Then, either $\vec{\sigma}^0$, $\vec{\sigma}^1$, $\vec{\sigma}^{(0,0)}$, $\vec{\sigma}^{(1,1)}$, $\vec{\sigma}^{s_1}$ or $\vec{\sigma}^{s_2}$ is a Nash equilibrium that is optimal for the receiver.
\end{proposition}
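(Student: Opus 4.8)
The plan is to obtain the proposition by pairing Theorem~\ref{thm:mediator} with the principle that access to a mediator can only help the receiver, so that the best receiver utility attainable without a mediator is sandwiched against the mediated optimum and then shown to coincide with it.

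First I would record the easy inequality that, for any information aggregation game, the receiver's utility in the best un-mediated Nash equilibrium is at most her utility in the best mediated Nash equilibrium: every un-mediated profile can be reproduced verbatim in the mediated game by a mediator that simply forwards each sender's message, so the mediated game's set of achievable receiver utilities contains the un-mediated one. By Theorem~\ref{thm:mediator} applied to $\Gamma$, this mediated optimum is realized by a Nash equilibrium whose induced outcome---the function sending each $\omega$ to the probability the receiver ends up playing $0$---equals $M_i$ for some $i \in \{1, \ldots, 6\}$.

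Next I would match each $M_i$ with its un-mediated counterpart, namely $M_1 \leftrightarrow \vec{\sigma}^{(0,0)}$, $M_2 \leftrightarrow \vec{\sigma}^{(1,1)}$, $M_3 \leftrightarrow \vec{\sigma}^{s_1}$, $M_4 \leftrightarrow \vec{\sigma}^{s_2}$, $M_5 \leftrightarrow \vec{\sigma}^0$ and $M_6 \leftrightarrow \vec{\sigma}^1$, and argue that the profile attached to the optimal index $i$ is a genuine Nash equilibrium of $\Gamma$ that induces exactly $M_i$ (hence hands the receiver the same expected utility as the mediated optimum, since the state-to-action outcome is identical). The sender side of incentive-compatibility is already settled in the discussion preceding the proposition, where each profile is shown to leave no sender with a profitable deviation. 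For the receiver side I would use that the mediated optimum is itself a Nash equilibrium, so its outcome $M_i$ must be incentive-compatible for the receiver; and in each candidate profile the receiver's posterior after any received message is supported precisely on the block of states that $M_i$ maps to the corresponding action, so the receiver's prescribed action is a best response exactly when $M_i$ is receiver-incentive-compatible. This transfers the receiver's incentive-compatibility from the mediated equilibrium to the un-mediated profile.

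Finally I would close the sandwich: the exhibited un-mediated profile is a Nash equilibrium of $\Gamma$ whose receiver utility equals the mediated optimum, while the first step bounds every un-mediated equilibrium below that optimum; hence this profile is receiver-optimal among all equilibria of $\Gamma$, which is what the proposition asserts. The main obstacle I anticipate is the receiver-side incentive-compatibility in the third step: one must confirm that the coarse information the receiver extracts from the two senders' binary messages in the un-mediated implementation induces the same partition of posteriors as the mediated outcome, so that the best-response property guaranteed by the mediated Nash equilibrium genuinely carries over---the sender-side checks, by contrast, are direct and already in hand.
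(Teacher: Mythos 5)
Your proposal is correct and follows essentially the same route as the paper: invoke Theorem~\ref{thm:mediator} to reduce to the six outcome maps, implement each map without a mediator via the corresponding profile (with the sender-side incentive checks already done in the preceding discussion and receiver-side incentive-compatibility inherited from the mediated equilibrium, since the induced posteriors coincide), and conclude by the observation that mediated equilibria subsume un-mediated ones. The receiver-side subtlety you flag is handled in the paper exactly as you suggest, so no gap remains.
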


Proposition~\ref{prop:two-senders-optimal} implies that we can break the problem of finding the receiver-optimal filter in an information aggregation game into four sub-problems:
\begin{itemize}
    \item[(a)] Finding a filter $(X_r^\Gamma)_{(0,0)}$ such that $\vec{\sigma}^{(0,0)}$ gives the maximal utility for the receiver in $\Gamma((X_r^\Gamma)_{(0,0)})$.
    \item[(b)] Finding a filter $(X_r^\Gamma)_{(1,1)}$ such that $\vec{\sigma}^{(1,1)}$ gives the maximal utility for the receiver in $\Gamma((X_r^\Gamma)_{(1,1)})$.
    \item[(c)] Finding a filter $(X_r^\Gamma)_{s_1}$ such that $\vec{\sigma}^{s_1}$ gives the maximal utility for the receiver in $\Gamma((X_r^\Gamma)_{s_1})$.
    \item[(d)] Finding a filter $(X_r^\Gamma)_{s_2}$ such that $\vec{\sigma}^{s_2}$ gives the maximal utility for the receiver in $\Gamma((X_r^\Gamma)_{s_2})$.
\end{itemize}

Note that these filters might not always exist (for instance, $(X_r^\Gamma)_{(0,0)}$ does not exist when the receiver always prefers $0$ and the senders always prefer $1$). Moreover, we are not including the optimal filters for $\vec{\sigma}^0$ and $\vec{\sigma}^1$ since all filters give the same utility with these strategies. Finding $(X_r^\Gamma)_{s_1}$ and $(X_r^\Gamma)_{s_2}$ (whenever they exist) reduce to the case of one sender (by ignoring $s_2$ and $s_1$, respectively), and thus can be solved with the $O(k \log k)$ algorithm presented in Section~\ref{sec:algorithm}. We next show how to compute $(X_r^\Gamma)_{(0,0)}$ using a linear program with $k$ variables and $2k+2$ constraints. Computing $(X_r^\Gamma)_{(1,1)}$ is analogous.

Suppose that $X$ is a filter that maximizes the receiver utility in $\Gamma(X)$ when playing strategy $\vec{\sigma}^{(0,0)}$. Denote by $M_0$ the set of signals of $X$ in which both senders and the receiver prefer $0$ to $1$. Consider a filter $X'$ that samples a signal $m$ according to $X$ and does the following: if $m \in M_0$, it sends signal $0$. Otherwise, it sends signal $1$. It is straightforward to check that if $\vec{\sigma}^{(0,0)}$ is a Nash equilibrium in $\Gamma^F_X$, it is also a Nash equilibrium in $\Gamma(X')$. Moreover, players get the same utilities with $X$ and $X'$, which means that $X'$ also maximizes the receiver's utility. This implies that we can restrict our search to binary filters in which all players prefer action $0$ on signal $0$ and the receiver prefers action $1$ on signal $1$. 

\commentout{Our aim then is finding a binary filter $X$ such that both senders and the receiver prefer action $0$ on signal $0$, the receiver prefers action $1$ on signal $1$, and such that the receiver gets the maximal possible utility by playing $0$ on signal $0$ and $1$ on signal $1$.}
Let $\omega_1, \omega_2, \ldots, \omega_k$ be the elements of $\Omega$ and define $A_i := u(s_1, \omega_i, 0) - u(s_1, \omega_i, 1)$, $B_i := u(s_2, \omega_i, 0) - u(s_2, \omega_i, 1)$ and $C_i := u(r, \omega_i, 0) - u(r, \omega_i, 1)$. A binary filter $X$ over $\Omega$ can be described by a sequence of $k$ real numbers $x_1, \ldots, x_k$ between $0$ and $1$ such that $x_i$ denotes the probability that the filter sends signal $0$ on state $\omega_i$. The condition that both senders prefer action $0$ on signal $0$ translates into $\sum_{i = 1}^k A_i x_i \ge 0$ and $\sum_{i = 1}^k B_i x_i \ge 0$.
\commentout{
$$\begin{array}{c}
    \sum_{i = 1}^k A_i x_i \ge 0\\
    \sum_{i = 1}^k B_i x_i \ge 0.\\ 
    \end{array}$$
}

Moreover, the utility of the receiver with filter $X$ and strategy $\vec{\sigma}^{(0,0)}$ is given by $\sum_{i = 1}^k (x_i \cdot u(r, \omega_i, 0) + (1 - x_i) \cdot u(r, \omega_i, 1))$. This sum can be rearranged into $\sum_{i = 1}^k u(r, \omega_i, 1) + \sum_{i = 1}^k C_i x_i$. Therefore, finding a filter $(X_r^\Gamma)_{(0,0)}$ such that $\vec{\sigma}^{(0,0)}$ gives the maximal utility for the receiver in $\Gamma((X_r^\Gamma)_{(0,0)})$ reduces to solving the following linear programming instance:

$$ \begin{array}{c}
\max \sum_{i = 1}^k C_i x_i\\
\vspace{1mm}
\sum_{i = 1}^k A_i x_i \ge 0\\
\vspace{1mm}
\sum_{i = 1}^k B_i x_i \ge 0\\
\vspace{1mm}
0 \le x_i \le 1 \quad \forall i \in [k].
\end{array} $$

Note that, even though we are maximizing the receiver's utility, it may be the case that $\vec{\sigma}^{(0,0)}$ is not incentive-compatible for the receiver in $\Gamma((X_r^\Gamma)_{(0,0)})$. For instance, the receiver might prefer playing $0$ whenever the senders send $1$. If such a thing happens (which can be tested in linear time), it means  that there is no filter satisfying the desired conditions.

\section{Conclusion}\label{sec:conclusion}

Garbling the information of the sender can sometimes increase the players' utilities in information transmission and information aggregation games, and the optimal garbling for both sender and receiver can be computed with efficient algorithms. However, this paper leaves two important open questions. First if the algorithm presented in Section~\ref{sec:algorithm} can be generalized to three or more actions and, second, if we can construct an efficient algorithm that outputs the best filter for one of the senders in an information aggregation game with multiple senders.

\bibliographystyle{eptcs}
\bibliography{bibfile}

\end{document}